\documentclass[twocolumn,prl,aps,floatfix]{revtex4-2}
\usepackage{amsmath,amssymb,amsfonts,amsthm}
\usepackage{graphicx}
\usepackage{tikz}
\usetikzlibrary{arrows,positioning,shapes.geometric}
\usepackage{hyperref}

\newtheorem{theorem}{Theorem}

\newtheorem{proposition}{Proposition}
\newtheorem{corollary}{Corollary}
\newtheorem{remark}{Remark}
\newtheorem{definition}{Definition}
\begin{document}

\title{Global Non-Identifiability of Fubini–Study Geometry from Complete One-Period Endpoint Data}
\author{Rashid Ahmad}
\affiliation{Department of Physics, College of Integrative Studies, Abdullah Al Salem University, Kuwait}
\date{\today}

\begin{abstract}
We establish a global, worst-case non-identifiability theorem for periodically driven finite-dimensional quantum systems. On the unrestricted smooth periodic Hamiltonian class, we show that the period-averaged Fubini--Study metric component associated with a fixed-initial-state trajectory cannot, in general, be reconstructed from exact one-period propagators indexed by every possible starting time and by the external parameter. Thus, even complete starting-time-resolved endpoint information is insufficient to determine this intra-period geometric quantity. The obstruction is characterized exactly. The starting-time-indexed endpoint data determine the conjugation-orbit path of the monodromy but not its particular unitary representative. On each fixed-monodromy slice, the observational fibres are precisely the right orbits generated by smooth parameter-dependent based loops taking values in the pointwise centralizer of the monodromy. We show that the Fubini--Study functional is not invariant under this fibre action and therefore does not factor through the endpoint observation map. An explicit real-analytic two-level witness demonstrates the obstruction within a commuting, one-generator Hamiltonian family, so neither non-Abelian time ordering nor quasienergy-branch ambiguity is required. An entire continuous family of Hamiltonians produces identical starting-time-indexed one-period endpoint data while yielding different, and in fact unboundedly separated, period-averaged Fubini--Study geometry. The non-identifiability persists under a uniform bound on the parameter derivative of the Hamiltonian. The result is a deterministic global identifiability statement rather than a claim of generic or experimental impossibility. It identifies a precise information gap between complete starting-time-indexed one-period endpoint data and the full intra-period propagator, and clarifies that the missing information is the unitary lift of the observed monodromy orbit rather than ordinary scalar phase freedom.
\end{abstract}

\maketitle

\section{Introduction}

Periodically driven quantum systems provide a natural setting in which stroboscopic dynamics and intra-period evolution contain distinct physical information. The foundations of Floquet theory and its quantum formulation were established by Floquet, Shirley, and Sambe \cite{Floquet1883,Shirley1965,Sambe1973}. Periodic driving has subsequently become a powerful tool for engineering effective Hamiltonians, synthetic gauge fields, and nonequilibrium topological phases \cite{Kitagawa2010,LindnerRefaelGalitski2011,GoldmanDalibard2014,OkaKitamura2019}. Modern Floquet theory emphasizes that the effective Hamiltonian and the micromotion constitute distinct parts of the dynamics, with the latter depending explicitly on the phase within the driving cycle \cite{BukovDAlessioPolkovnikov2015,EckardtAnisimovas2015,Eckardt2017,Holthaus2016}. In particular, the full evolution over a cycle can contain information that is not encoded in the one-period Floquet operator or quasienergy spectrum alone \cite{Rudner2013,NathanRudner2015}.

This distinction is especially relevant for geometric quantities. The Riemannian structure of quantum-state manifolds was formulated by Provost and Vallée \cite{ProvostVallee1980}, while geometric quantum evolution in projective Hilbert space was developed by Aharonov and Anandan \cite{AharonovAnandan1987}. Berry's geometric phase established a complementary geometric description of parameter-dependent quantum evolution \cite{Berry1984,Simon1983}. The quantum geometric tensor subsequently became an important framework for quantum phase transitions, adiabatic response, parameter sensitivity, and quantum metrology \cite{BraunsteinCaves1994,ZanardiPaunkovic2006,ZanardiGiordaCozzini2007,Kolodrubetz2017}. Its geometric content has also been directly probed experimentally in photonic, superconducting, and spin-qubit systems \cite{Bleu2018,Tan2019,Yu2019,Gianfrate2020}. These developments emphasize that quantum geometry is a property of the quantum-state trajectory and can therefore depend on dynamical information beyond spectral or endpoint data.

Here we consider the corresponding inverse problem for a finite-dimensional periodically driven quantum system. The target is the period average of a parameter-space Fubini--Study metric component generated by the intra-period trajectory of a fixed initial pure state. We deliberately consider an observation model stronger than the conventional Floquet monodromy: the observer is given the exact one-period propagator for every starting time within the driving period, together with its complete dependence on the external parameter. The question is whether this complete starting-time-indexed endpoint information is sufficient to determine the geometric target.

We show that, on the unrestricted smooth finite-dimensional periodic Hamiltonian class, it is not. Specifically, there exist distinct Hamiltonian families that produce exactly the same one-period propagator for every starting time and every parameter value, while producing different period-averaged Fubini--Study metric components for the same prescribed initial state. Thus the geometric functional does not factor through the starting-time-indexed endpoint observation map. This is an exact deterministic non-identifiability result and does not rely on experimental noise, finite sampling, or numerical reconstruction error.

The obstruction has a precise global structure. The starting-time-indexed endpoint data determine the path obtained by conjugating the monodromy with the intra-period propagator, but they do not determine a particular unitary representative, or lift, of that path. On a fixed-monodromy slice, we show that the observational fibres are exactly generated by smooth parameter-dependent based loops taking values in the pointwise centralizer of the monodromy. This gives an exact characterization of observational equivalence and reduces identifiability of any target functional to invariance under the corresponding fibre action.

This ambiguity is distinct from ordinary scalar phase freedom. Scalar phases are already invisible to the Fubini--Study metric, whereas non-scalar elements of the monodromy centralizer can alter the physical projective trajectory while leaving every observed one-period endpoint propagator unchanged. The missing information is therefore not merely a quantum phase, but information about the intra-period unitary lift of the observed conjugation path.

We establish the failure of identifiability through an explicit real-analytic two-level construction. The witness lies in a one-generator commuting algebra, so Hamiltonians at different times commute and no non-Abelian time ordering is involved. Nevertheless, a continuous family of Hamiltonians has exactly identical starting-time-indexed endpoint data, while the period-averaged Fubini--Study metric varies nontrivially and becomes unbounded along the observational fibre. The construction therefore separates the present obstruction from both quasienergy-branch ambiguity and genuinely noncommutative Floquet dynamics.

The non-identifiability also survives quantitative restrictions on the Hamiltonian family. A uniform bound on the parameter derivative of the Hamiltonian does not restore identifiability, although it does provide an a priori bound on the geometric target. Conversely, within the identity observational fibre the target can attain arbitrarily large values at a prescribed parameter point. These results show that the obstruction concerns the information content of the observation map itself rather than merely the size of a particular counterexample.

The result is deliberately a global, worst-case statement. We do not claim that non-identifiability occurs on every observational fibre, nor do we establish generic non-identifiability in a specified topology or measure. Likewise, exact non-identifiability should not be confused with an instability theorem for noisy inverse reconstruction. Additional measurements that resolve the intra-period propagator can restore the target. The result instead identifies a precise limitation of complete one-period endpoint observations, even when those observations are available noiselessly and for every starting phase.

The resulting observation hierarchy is therefore strict: fixed-origin monodromy contains less information than the complete starting-time-indexed one-period endpoint data, while the latter contain less information than the full parameter- and time-resolved propagator. The geometric functional considered here does not factor through either of the first two levels, whereas it is directly determined by the full propagator. The theorem thus identifies a sharp information gap between complete one-period endpoint data and genuine micromotion-resolved dynamics.

The remainder of the paper establishes the Hamiltonian--propagator correspondence, characterizes the exact centralizer fibres of the starting-time-indexed observation map, and formulates the associated factorization criterion for identifiability. We then prove the global non-identifiability theorem using the analytic two-level witness and establish the bounded-derivative and identity-fibre extensions. Finally, we discuss the physical distinction between endpoint observational equivalence, micromotion information, and genuine gauge freedom.

\section{Setup and Definitions}

Let $\mathcal H\simeq\mathbb C^d$, with $d\ge2$, and let $I\subset\mathbb R$ be a nonempty open interval. We fix a period $T>0$, set $\hbar=1$, and take $\theta$ to be dimensionless. Starting-time indexing refers to the initial time $\tau$ of a one-period propagator, defined modulo the driving period,
\begin{align}
\mathbb T_T:=\mathbb R/T\mathbb Z .
\end{align}
Thus $[\tau]\in\mathbb T_T$ specifies the starting phase of the drive. For a smooth $T$-periodic Hamiltonian,
\begin{align}
\mathcal D_T^\infty
:=
\Bigl\{
H\in C^\infty
\bigl(I\times\mathbb R;\operatorname{Herm}(\mathbb C^d)\bigr):
\nonumber\\
H(\theta,t+T)=H(\theta,t)
\Bigr\},
\end{align}
let $U_\theta(t_2,t_1)$ denote the corresponding propagator and write
\begin{align}
U_\theta(t):=U_\theta(t,0),
\qquad
M_\theta:=U_\theta(T,0).
\end{align}
Periodicity of the Hamiltonian gives
\begin{align}
U_\theta(t+T)
=
U_\theta(t)M_\theta .
\label{eq:floquet_property}
\end{align}

We consider three levels of dynamical information. The fixed-origin monodromy is
\begin{align}
E_0:\mathcal D_T^\infty
&\longrightarrow
\mathcal Y_0:=C^\infty(I,U(d)),
\nonumber\\
E_0[H](\theta)
&:=U_\theta(T,0)=M_\theta .
\label{eq:E0}
\end{align}
A strictly richer observation is the complete starting-time-indexed family of one-period propagators,
\begin{align}
E_{\rm ph}:\mathcal D_T^\infty
&\longrightarrow
\mathcal Y_{\rm ph}
:=C^\infty(I\times\mathbb T_T,U(d)),
\nonumber\\
E_{\rm ph}[H](\theta,[\tau])
&:=U_\theta(\tau+T,\tau).
\label{eq:Eph}
\end{align}
The periodicity of the drive implies
\begin{align}
U_\theta(\tau+2T,\tau+T)
=
U_\theta(\tau+T,\tau),
\end{align}
so $E_{\rm ph}$ is well defined on $\mathbb T_T$. This observation specifies the complete one-period endpoint data for every starting phase, but contains no subperiod propagators.

For comparison, the full intraperiod propagator is
\begin{align}
E_{\rm full}:\mathcal D_T^\infty
&\longrightarrow
\mathcal Y_{\rm full}
:=C^\infty(I\times[0,T],U(d)),
\nonumber\\
E_{\rm full}[H](\theta,t)
&:=U_\theta(t,0).
\label{eq:Efull}
\end{align}
Since
\begin{align}
U_\theta(t_2,t_1)
=
U_\theta(t_2,0)
U_\theta(t_1,0)^\dagger ,
\end{align}
$E_{\rm full}$ determines the complete two-time evolution within a period.

The inverse problem considered here is formulated at the level of exact unitary data. In particular, $E_{\rm ph}$ is assumed known as an exact smooth map, with no statistical or experimental uncertainty. The resulting identifiability statement is therefore a deterministic one; questions of reconstruction stability under noise are separate. The unitary formulation is also deliberately stronger than observations restricted to projective unitaries or quantum channels.

For a fixed pure initial state
\begin{align}
\rho_0=|\psi_0\rangle\langle\psi_0|,
\end{align}
define the parameter generator in the instantaneous body frame by
\begin{align}
K_\theta(t)
:=
iU_\theta^\dagger(t)\partial_\theta U_\theta(t).
\label{eq:K}
\end{align}
The target functional is the period average of the $\theta\theta$ component of the Fubini--Study metric pulled back along the fixed-state trajectory,
\begin{align}
G_{\rho_0}[H](\theta)
:=
\frac{1}{T}
\int_0^T
\operatorname{Var}_{\rho_0}
\!\left(K_\theta(t)\right)\,dt ,
\label{eq:G}
\end{align}
where
\begin{align}
\operatorname{Var}_{\rho_0}(X)
:=
\langle\psi_0|X^2|\psi_0\rangle
-
\langle\psi_0|X|\psi_0\rangle^2 .
\end{align}
Equivalently, the parameter-dependent ray
\begin{align}
\iota_t:I\longrightarrow\mathbb{CP}^{d-1},
\qquad
\iota_t(\theta)
:=[U_\theta(t)|\psi_0\rangle]
\end{align}
induces the metric component
\begin{align}
G_{\rho_0}[H](\theta)
=
\frac{1}{T}
\int_0^T
(\iota_t^*g_{\rm FS})_{\theta\theta}\,dt .
\label{eq:G_FS}
\end{align}
With the convention
\begin{align}
g_{\rm FS}
(\partial_\theta\psi,\partial_\theta\psi)
=
\langle\partial_\theta\psi|\partial_\theta\psi\rangle
-
|\langle\psi|\partial_\theta\psi\rangle|^2 ,
\end{align}
the pure-state quantum Fisher information satisfies $F_Q=4g_{\rm FS}$. Hence
\begin{align}
G_{\rho_0}[H](\theta)
=
\frac{1}{4T}
\int_0^T
F_Q[\rho_\theta(t)]\,dt ,
\quad
\rho_\theta(t)
=
U_\theta(t)\rho_0U_\theta^\dagger(t).
\label{eq:QFI}
\end{align}
Thus the target is the time average of the instantaneous pure-state quantum Fisher information and should not be confused with the quantum Fisher information of a time-averaged state.

\begin{remark}[Coordinate dependence]
$G_{\rho_0}$ is a coordinate component of the pullback metric rather than a scalar quantity. Under a smooth reparametrization $\tilde\theta=\tilde\theta(\theta)$,
\begin{align}
G_{\tilde\theta\tilde\theta}
=
\left(
\frac{d\theta}{d\tilde\theta}
\right)^2
G_{\theta\theta}.
\label{eq:coordinate_transform}
\end{align}
All identifiability statements below therefore refer to the prescribed parameter coordinate $\theta$.
\end{remark}

\begin{remark}[Reference-time convention]
The observation $E_{\rm ph}$ contains one-period propagators for every starting phase, whereas $G_{\rho_0}$ is defined from the trajectory generated from the fixed reference time $t=0$. A change of reference time would in general define a different functional. The inclusion of all starting phases in $E_{\rm ph}$ therefore does not alter the definition of the target.
\end{remark}

\begin{remark}[Projective gauge and observational fibres]
The scalar $U(1)$ phase is a projective gauge and does not affect the physical ray or the Fubini--Study metric. The ambiguity relevant here is instead an observational equivalence associated with the centralizer of the monodromy. In particular, a transformation
\begin{align}
U_\theta(t)
\longmapsto
U_\theta(t)V_\theta(t)
\end{align}
by a suitable centralizer-valued based loop can leave $E_{\rm ph}$ unchanged while generally altering the ray
\begin{align}
[U_\theta(t)|\psi_0\rangle].
\end{align}
Consequently, equality of the complete starting-time-indexed endpoint data does not imply equality of the underlying intraperiod trajectory. This distinction between projective gauge freedom and endpoint-induced observational equivalence is the key structural feature underlying the non-identifiability result.
\end{remark}

\section{Hamiltonian--Propagator Bijection}

It is convenient to formulate the inverse problem directly at the level of propagators. Define
\begin{multline}
\mathcal P_T^\infty
:=
\Bigl\{
U\in C^\infty(I\times\mathbb R,U(d)):
\\
U(\theta,0)=\mathbb I,\;
U(\theta,t+T)=U(\theta,t)U(\theta,T)
\Bigr\}.
\end{multline}

\begin{proposition}[Hamiltonian--propagator correspondence]
The map
\begin{align}
\Phi:\mathcal D_T^\infty&\longrightarrow\mathcal P_T^\infty,
&
\Phi(H)(\theta,t)&=U_\theta(t,0),
\end{align}
is a bijection. Its inverse is
\begin{align}
\Phi^{-1}(U)(\theta,t)
&=
i\,\partial_tU_\theta(t)U_\theta^\dagger(t)
=
-i\,U_\theta(t)\partial_tU_\theta^\dagger(t).
\label{eq:H_from_U}
\end{align}
\end{proposition}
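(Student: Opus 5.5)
We verify four things in turn: $\Phi$ lands in $\mathcal P_T^\infty$, the proposed inverse lands in $\mathcal D_T^\infty$, and the two composites are the identity in each direction. Throughout, we use the standard theory of linear ODEs with smooth coefficients depending smoothly on a parameter.

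\emph{Step 1: $\Phi$ is well defined.} For $H\in\mathcal D_T^\infty$, the propagator is the unique solution of
\begin{align}
i\partial_tU_\theta(t)=H(\theta,t)U_\theta(t), \qquad U_\theta(0)=\mathbb I .
\end{align}
This is a linear ODE on $\mathbb C^{d\times d}$, so the solution exists globally on $\mathbb R$. Smooth dependence on parameters gives $U\in C^\infty(I\times\mathbb R)$. Hermiticity of $H$ gives $\partial_t(U^\dagger U)=0$, so $U$ is unitary. The Floquet property \eqref{eq:floquet_property} follows from uniqueness. Both $t\mapsto U_\theta(t+T)$ and $t\mapsto U_\theta(t)M_\theta$ solve the same ODE, because $H$ is $T$-periodic. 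They also agree at $t=0$, so they coincide. Hence $\Phi(H)\in\mathcal P_T^\infty$.

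\emph{Step 2: the inverse formula maps into $\mathcal D_T^\infty$.} Let $U\in\mathcal P_T^\infty$ and set $H:=i\,\partial_tU\,U^\dagger$, which is smooth on $I\times\mathbb R$.
\begin{itemize}
\item \textbf{Equality of the two expressions in \eqref{eq:H_from_U}.} Differentiating $UU^\dagger=\mathbb I$ gives $\partial_tU\,U^\dagger=-U\partial_tU^\dagger$.
\item \textbf{Hermiticity.} We have $H^\dagger=-iU\partial_tU^\dagger=i\partial_tU\,U^\dagger=H$, using the previous identity.
\item \textbf{Periodicity.} This is where the defining relation $U(\theta,t+T)=U(\theta,t)U(\theta,T)$ of $\mathcal P_T^\infty$ is used. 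Differentiating it in $t$ gives $\partial_tU(t+T)=\partial_tU(t)M$. Since $M$ is constant in $t$ and unitary,
\begin{align}
H(t+T)=i\,\partial_tU(t)\,MM^\dagger\,U(t)^\dagger=H(t).
\end{align}
\end{itemize}
So $H\in\mathcal D_T^\infty$.

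\emph{Step 3: the composites are the identity.}
\begin{itemize}
\item $\Phi^{-1}\circ\Phi=\mathrm{id}$: the Schrödinger equation gives $i\partial_tU\,U^\dagger=HUU^\dagger=H$.
\item $\Phi\circ\Phi^{-1}=\mathrm{id}$: by construction $U$ satisfies $i\partial_tU=HU$ with $U(\theta,0)=\mathbb I$. ODE uniqueness then forces $\Phi(H)=U$.
\end{itemize}
Together these show $\Phi$ is a bijection.

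The only step with any real content is the periodicity check in Step 2. It is exactly why $\mathcal P_T^\infty$ is defined through the right-multiplicative Floquet relation. A condition such as left multiplication would not give a periodic generator. The rest is routine, and the one technical point is joint smoothness in $(\theta,t)$, which we invoke from the smooth-parameter-dependence theorem for ODEs.
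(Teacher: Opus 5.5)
Your proposal is correct and follows the same route as the paper's proof. You use ODE existence and uniqueness for well-definedness, differentiate $UU^\dagger=\mathbb I$ for Hermiticity, and differentiate the right-multiplicative relation $U_\theta(t+T)=U_\theta(t)M_\theta$ for periodicity of $i\,\partial_tU\,U^\dagger$. You are more explicit than the paper in two places: you derive the Floquet relation from uniqueness, and you check both composites $\Phi^{-1}\circ\Phi$ and $\Phi\circ\Phi^{-1}$, which the paper only asserts.
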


\begin{proof}
For $H\in\mathcal D_T^\infty$, existence and uniqueness of the Schrödinger equation give a unique smooth propagator $U$, with $U(\theta,0)=\mathbb I$. Periodicity of $H$ implies the Floquet relation in Eq.~\eqref{eq:floquet_property}, so $\Phi(H)\in\mathcal P_T^\infty$.

Conversely, let $U\in\mathcal P_T^\infty$ and define
\begin{align}
H_U(\theta,t)
:=
i\,\partial_tU_\theta(t)U_\theta^\dagger(t).
\end{align}
Unitarity gives
\begin{align}
\partial_tU_\theta U_\theta^\dagger
+
U_\theta\partial_tU_\theta^\dagger
&=0,
\\
H_U^\dagger
&=
-iU_\theta\partial_tU_\theta^\dagger
=
i\,\partial_tU_\theta U_\theta^\dagger
=
H_U,
\end{align}
so $H_U$ is Hermitian. Moreover, using
$U_\theta(t+T)=U_\theta(t)U_\theta(T)$,
\begin{align}
\partial_tU_\theta(t+T)
&=
\partial_tU_\theta(t)U_\theta(T),
\\
H_U(\theta,t+T)
&=
i\,\partial_tU_\theta(t)
U_\theta(T)U_\theta^\dagger(T)
U_\theta^\dagger(t)
\nonumber\\
&=
i\,\partial_tU_\theta(t)U_\theta^\dagger(t)
=
H_U(\theta,t).
\end{align}
Hence $H_U\in\mathcal D_T^\infty$. The two constructions are inverse to one another, proving the bijection.
\end{proof}

The normalization $U(\theta,0)=\mathbb I$ is essential: without it, the same propagator trajectory would admit an arbitrary $\theta$-dependent left unitary factor and would not determine a unique Hamiltonian in the stated representation.

For a fixed smooth monodromy $M\in C^\infty(I,U(d))$, define the fixed-monodromy slice
\begin{align}
\mathcal P_M^\infty
:=
\Bigl\{
U\in\mathcal P_T^\infty:
U_\theta(T)=M_\theta
\Bigr\}.
\end{align}
The corresponding pointwise unitary and Lie-algebra centralizers are
\begin{align}
Z(M_\theta)
&:=
\{X\in U(d):XM_\theta=M_\theta X\},
\\
\mathfrak z(M_\theta)
&:=
\{X\in\mathfrak u(d):[X,M_\theta]=0\}.
\end{align}
At the propagator level, the starting-time-indexed observation is
\begin{align}
E_{\rm ph}:\mathcal P_T^\infty
&\longrightarrow
\mathcal Y_{\rm ph},
\nonumber\\
E_{\rm ph}[U](\theta,[\tau])
&=
U_\theta(\tau+T,\tau).
\label{eq:Eph_prop}
\end{align}
The Hamiltonian and propagator formulations are equivalent by the preceding proposition.

\section{Observation Geometry and Exact Fibre Characterization}

The structure of the starting-time-indexed endpoint data follows directly from the Floquet relation.

\begin{proposition}[Starting-time-indexed observation]
For every $U\in\mathcal P_T^\infty$,
\begin{align}
E_{\rm ph}[U](\theta,[\tau])
&=
U_\theta(\tau)M_\theta U_\theta^\dagger(\tau),
\label{eq:orbit_observation}
\end{align}
where $M_\theta=U_\theta(T)$.
\end{proposition}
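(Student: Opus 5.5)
The plan is to compute $E_{\rm ph}[U](\theta,[\tau])=U_\theta(\tau+T,\tau)$ directly from the defining properties of $\mathcal P_T^\infty$, using only the two-time composition law and the Floquet relation. For any $U\in\mathcal P_T^\infty$ the two-time propagator is defined from the one-time object by
\begin{align}
U_\theta(t_2,t_1)=U_\theta(t_2)U_\theta^\dagger(t_1).
\end{align}
For $U=\Phi(H)$ this agrees with the Schr\"odinger propagator by the Hamiltonian--propagator correspondence, because both sides solve the same equation in $t_2$ with value $\mathbb I$ at $t_2=t_1$. Taking $t_1=\tau$ and $t_2=\tau+T$ gives $U_\theta(\tau+T)U_\theta^\dagger(\tau)$.

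Next, I insert the Floquet relation $U_\theta(\tau+T)=U_\theta(\tau)M_\theta$ with $M_\theta=U_\theta(T)$. This relation is part of the definition of $\mathcal P_T^\infty$, or equivalently follows from Eq.~\eqref{eq:floquet_property}. Substituting it yields
\begin{align}
U_\theta(\tau+T,\tau)=U_\theta(\tau)M_\theta U_\theta^\dagger(\tau),
\end{align}
which is Eq.~\eqref{eq:orbit_observation}.

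It remains to check that the right-hand side depends only on $[\tau]\in\mathbb T_T$, so that the formula is consistent with $E_{\rm ph}$ being defined on $I\times\mathbb T_T$. Replacing $\tau$ by $\tau+T$ and applying the Floquet relation again gives
\begin{align}
U_\theta(\tau)M_\theta M_\theta M_\theta^\dagger U_\theta^\dagger(\tau)=U_\theta(\tau)M_\theta U_\theta^\dagger(\tau).
\end{align}
The identity for $\tau+kT$ with $k\in\mathbb Z$ then follows by induction, including negative $k$ via $U_\theta(\tau-T)=U_\theta(\tau)M_\theta^\dagger$. Smoothness in $(\theta,\tau)$ is inherited from $U$.

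No step is a genuine obstacle. The only point needing care is justifying the identification $U_\theta(t_2,t_1)=U_\theta(t_2,0)U_\theta(t_1,0)^\dagger$ at the propagator level, for $\tau$ outside $[0,T]$ as well. I handle this by uniqueness of solutions to the Schr\"odinger equation together with the Hamiltonian--propagator correspondence.
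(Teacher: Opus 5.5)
Your proposal is correct and follows essentially the same route as the paper: apply the composition law $U_\theta(\tau+T,\tau)=U_\theta(\tau+T)U_\theta^\dagger(\tau)$ and then substitute the Floquet relation $U_\theta(\tau+T)=U_\theta(\tau)M_\theta$. Your extra check that the right-hand side depends only on $[\tau]\in\mathbb T_T$, and your uniqueness justification of the composition law for all $\tau\in\mathbb R$, are both correct; the paper takes these for granted, having already noted in the setup that $E_{\rm ph}$ is well defined on $\mathbb T_T$.
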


\begin{proof}
Using the composition law and the Floquet relation,
\begin{align}
U_\theta(\tau+T,\tau)
&=
U_\theta(\tau+T,0)
U_\theta^\dagger(\tau,0)
\nonumber\\
&=
U_\theta(\tau)M_\theta
U_\theta^\dagger(\tau).
\end{align}
\end{proof}

Define the observed conjugation-orbit path
\begin{align}
\Gamma_U(\theta,[t])
:=
\operatorname{Ad}_{U_\theta(t)}M_\theta
=
U_\theta(t)M_\theta U_\theta^\dagger(t).
\end{align}
Equation~\eqref{eq:orbit_observation} shows that
\begin{align}
E_{\rm ph}[U]=\Gamma_U.
\end{align}
Thus, for each $\theta$, the complete starting-time-indexed endpoint data determine the path of $M_\theta$ through its conjugacy orbit
\begin{align}
\mathcal O_{M_\theta}
:=
\{VM_\theta V^\dagger:V\in U(d)\},
\end{align}
but do not, in general, determine the unitary $U_\theta(t)$ that generates this path.

The resulting ambiguity is characterized by the elementary identity
\begin{align}
\operatorname{Ad}_{U_1(t)}M
=
\operatorname{Ad}_{U_2(t)}M
\quad\Longleftrightarrow\quad
[U_2^\dagger(t)U_1(t),M]=0.
\label{eq:orbit_fibre_identity}
\end{align}
Hence the relevant fibres are generated by right multiplication with paths taking values in the centralizer of the monodromy.

An infinitesimal form of the same observation follows by differentiating the orbit path:
\begin{align}
\dot\Gamma(t)
&=
\dot U(t)MU^\dagger(t)
+
UM\dot U^\dagger(t)
\nonumber\\
&=
-i[H(t),\Gamma(t)].
\label{eq:Gamma_dot}
\end{align}
Consequently, the observed tangent vector determines only the commutator component of the Hamiltonian. Its kernel is the Lie-algebra centralizer
\begin{align}
\ker\operatorname{ad}_{\Gamma(t)}
=
\mathfrak z(\Gamma(t)).
\end{align}
Since
\begin{align}
\mathfrak z(\Gamma(t))
=
U(t)\mathfrak z(M)U^\dagger(t),
\end{align}
the infinitesimal ambiguity is precisely the tangent-space manifestation of the global centralizer fibre. The infinitesimal relation is useful for interpretation but is not required for the global fibre characterization.

\begin{theorem}[Fixed-monodromy fibre characterization]
Fix $M\in C^\infty(I,U(d))$. On $\mathcal P_M^\infty$, define
\begin{align}
\mathcal G_M^0
:=
\Bigl\{
V\in C^\infty(I\times\mathbb R,U(d)):
&\;V_\theta(0)=\mathbb I,
\nonumber\\
&\;V_\theta(t+T)=V_\theta(t),
\nonumber\\
&\;[V_\theta(t),M_\theta]=0
\quad\forall(\theta,t)
\Bigr\}.
\label{eq:G_M}
\end{align}
Then, for $U_1,U_2\in\mathcal P_M^\infty$,
\begin{align}
E_{\rm ph}[U_1]=E_{\rm ph}[U_2]
\quad\Longleftrightarrow\quad
U_1=U_2V
\quad\text{for some }V\in\mathcal G_M^0.
\label{eq:fibre_characterization}
\end{align}
Thus every fibre of $E_{\rm ph}$ on $\mathcal P_M^\infty$ is a right $\mathcal G_M^0$-orbit.
\end{theorem}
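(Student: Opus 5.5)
The plan is to prove both directions directly from the orbit formula of the Starting-time-indexed observation proposition, $E_{\rm ph}[U]=\Gamma_U$, together with the elementary identity \eqref{eq:orbit_fibre_identity}. Throughout, $U_1,U_2\in\mathcal P_M^\infty$ share the same monodromy $M_\theta=U_{1,\theta}(T)=U_{2,\theta}(T)$. This shared monodromy is what allows the comparison to happen entirely inside the single centralizer $Z(M_\theta)$.

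For the direction ($\Leftarrow$), suppose $U_1=U_2V$ with $V\in\mathcal G_M^0$. Since $V_\theta(t)$ commutes with $M_\theta$,
\begin{align}
U_{1,\theta}(t)M_\theta U_{1,\theta}^\dagger(t)
=U_{2,\theta}(t)V_\theta(t)M_\theta V_\theta^\dagger(t)U_{2,\theta}^\dagger(t)
=\Gamma_{U_2}(\theta,[t]).
\end{align}
Hence $E_{\rm ph}[U_1]=E_{\rm ph}[U_2]$ by Eq.~\eqref{eq:orbit_observation}. I will also note that this direction is consistent with membership in the slice. Indeed, $U_2V$ lies in $\mathcal P_M^\infty$, because $V_\theta(0)=\mathbb I$ and $V_\theta(T)=V_\theta(0)=\mathbb I$ give $U_{1,\theta}(T)=M_\theta$. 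The Floquet relation then follows from
\begin{align}
U_2(t+T)V(t+T)=U_2(t)MV(t)=U_2(t)V(t)M,
\end{align}
using periodicity and commutation. This check is not strictly needed, since $U_1$ is assumed to lie in the slice, but it confirms that the right action preserves $\mathcal P_M^\infty$.

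For the direction ($\Rightarrow$), I define $V:=U_2^\dagger U_1\in C^\infty(I\times\mathbb R,U(d))$. Smoothness is inherited from $U_1$ and $U_2$. Then:
\begin{itemize}
\item $V_\theta(0)=\mathbb I$ holds because both propagators are normalized at $t=0$.
\item Equality of the observations at every $(\theta,[\tau])$, combined with Eq.~\eqref{eq:orbit_fibre_identity}, gives $[V_\theta(t),M_\theta]=0$ for all $(\theta,t)$. Because $E_{\rm ph}$ is evaluated at $[\tau]$, this first holds for $t\in\mathbb R$ modulo $T$; but the identity $\Gamma_U(\theta,t)=U_\theta(t+T,t)$ holds for every real $t$, so the commutation holds for all real $t$.
\item Periodicity follows from the Floquet relation:
\begin{align}
V_\theta(t+T)=M_\theta^\dagger U_{2,\theta}^\dagger(t)U_{1,\theta}(t)M_\theta=M_\theta^\dagger V_\theta(t)M_\theta=V_\theta(t),
\end{align}
where the last step uses the commutation just established.
\end{itemize}
So $V\in\mathcal G_M^0$ and $U_1=U_2V$.

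The only point needing care, and the step I expect to be the main obstacle, is that the periodicity of $V$ is not automatic. It relies on having a common monodromy, so that the same $M_\theta$ appears on both sides, and on the centralizer condition. This is why the characterization is stated on a fixed-monodromy slice and not on all of $\mathcal P_T^\infty$. Once $V$ is shown to be periodic, the final sentence of the theorem follows immediately: each fibre of $E_{\rm ph}$ restricted to $\mathcal P_M^\infty$ is exactly the right orbit $U_2\mathcal G_M^0$ of any of its elements.
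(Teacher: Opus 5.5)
Your proof is correct and follows the paper's argument essentially step for step. You set $V:=U_2^\dagger U_1$, get the centralizer condition from the orbit identity, get $V(0)=\mathbb I$ from normalization, obtain periodicity via $V(t+T)=M^\dagger V(t)M=V(t)$, and prove the converse by conjugation plus the check $U_1(t+T)=U_1(t)M$; your remark that the commutation holds for every real $t$, not just modulo $T$, is a detail the paper leaves implicit. One small correction to your closing remark: a common monodromy is not really an extra hypothesis, since $E_{\rm ph}[U](\theta,[0])=M_\theta$ means equal endpoint data already force equal monodromy, which is exactly what the paper's subsequent corollary uses to pass from the slice to the global fibre.
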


\begin{proof}
Suppose first that
$E_{\rm ph}[U_1]=E_{\rm ph}[U_2]$. By Eq.~\eqref{eq:orbit_observation},
\begin{align}
U_1(t)MU_1^\dagger(t)
=
U_2(t)MU_2^\dagger(t).
\end{align}
Define
\begin{align}
V(t):=U_2^\dagger(t)U_1(t).
\end{align}
Then
\begin{align}
V(t)M
=
MV(t),
\end{align}
so $V(t)$ takes values in $Z(M)$. Since both propagators are normalized at $t=0$,
\begin{align}
V(0)=\mathbb I.
\end{align}
Furthermore,
\begin{align}
V(t+T)
&=
U_2^\dagger(t+T)U_1(t+T)
\nonumber\\
&=
(U_2(t)M)^\dagger(U_1(t)M)
\nonumber\\
&=
M^\dagger V(t)M
=
V(t),
\end{align}
where the last equality follows from $[V(t),M]=0$. Hence $V\in\mathcal G_M^0$ and $U_1=U_2V$.

Conversely, suppose $U_1=U_2V$ with $V\in\mathcal G_M^0$. Then
\begin{align}
U_1(t)MU_1^\dagger(t)
&=
U_2(t)V(t)MV^\dagger(t)U_2^\dagger(t)
\nonumber\\
&=
U_2(t)MU_2^\dagger(t),
\end{align}
so $E_{\rm ph}[U_1]=E_{\rm ph}[U_2]$. Moreover,
\begin{align}
U_1(t+T)
&=
U_2(t)MV(t)
\nonumber\\
&=
U_2(t)V(t)M
=
U_1(t)M,
\end{align}
and therefore $U_1\in\mathcal P_M^\infty$.
\end{proof}

\begin{corollary}[Global fibre characterization]
For $U\in\mathcal P_T^\infty$, let
\begin{align}
M_U(\theta):=U_\theta(T).
\end{align}
Since
\begin{align}
E_{\rm ph}[U](\theta,[0])=M_U(\theta),
\end{align}
the monodromy is constant on every $E_{\rm ph}$ fibre. Consequently,
\begin{align}
E_{\rm ph}^{-1}(E_{\rm ph}[U])
\cap
\mathcal P_{M_U}^\infty
=
U\,\mathcal G_{M_U}^0 .
\label{eq:global_fibre}
\end{align}
\end{corollary}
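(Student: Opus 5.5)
The corollary should follow from the fixed-monodromy fibre characterization once we show that passing to an $E_{\rm ph}$ fibre never changes the monodromy. The plan has three short steps.

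\emph{Step 1: the monodromy is recovered from the data.} Evaluating Eq.~\eqref{eq:orbit_observation} at $[\tau]=[0]$ and using $U_\theta(0)=\mathbb I$ gives
\begin{align}
E_{\rm ph}[U](\theta,[0])=U_\theta(T,0)=M_U(\theta).
\end{align}
Hence the monodromy is a function of the observation: if $E_{\rm ph}[U']=E_{\rm ph}[U]$, then $M_{U'}=M_U$. Every element of the fibre $E_{\rm ph}^{-1}(E_{\rm ph}[U])$ therefore already lies in $\mathcal P_{M_U}^\infty$. In particular, intersecting with the slice is automatic, so the fibre coincides with its intersection with $\mathcal P_{M_U}^\infty$.

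\emph{Step 2: both inclusions of Eq.~\eqref{eq:global_fibre}.} Take $M:=M_U$ in the fixed-monodromy fibre characterization, which is legitimate since $U\in\mathcal P_{M_U}^\infty$.
\begin{itemize}
\item For $U'$ in the left-hand side, both $U'$ and $U$ lie in $\mathcal P_{M_U}^\infty$ and have equal $E_{\rm ph}$. The theorem then gives $U'=UV$ with $V\in\mathcal G_{M_U}^0$.
\item Conversely, for $V\in\mathcal G_{M_U}^0$, the converse direction of the theorem's proof shows that $UV\in\mathcal P_{M_U}^\infty$ and $E_{\rm ph}[UV]=E_{\rm ph}[U]$.
\end{itemize}

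\emph{Step 3: smoothness of the converse direction.} One must check that $UV$ is smooth and satisfies $UV(\theta,0)=\mathbb I$. Smoothness is immediate because products of smooth maps are smooth, and the normalization follows from $U_\theta(0)=V_\theta(0)=\mathbb I$.

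There is no real obstacle here. The only point needing care is Step 1: the normalization $U(\theta,0)=\mathbb I$ is what makes the $[0]$-slice of the data equal to $M_U$ itself rather than merely a conjugate of it. Without that normalization the slice restriction in the corollary would be a genuine constraint instead of an automatic one.
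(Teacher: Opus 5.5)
Your proposal is correct and follows the paper's route: you read off $M_U$ from the $[0]$-slice of the data, so every element of the fibre already lies in $\mathcal P_{M_U}^\infty$, and then you invoke both directions of the fixed-monodromy fibre characterization with $M=M_U$. Your Step 3, the normalization and smoothness of $UV$ (which give $(UV)_\theta(T)=M_U(\theta)$), fills in a detail that the paper's proof of that theorem leaves implicit.
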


The right action of $\mathcal G_M^0$ is free and transitive on each fibre. Indeed, if $UV=U$, then $V=\mathbb I$, while for any two elements $U_1,U_2$ of the same fibre, the unique element relating them is
\begin{align}
V=U_2^\dagger U_1.
\end{align}
This is a set-theoretic orbit statement; no manifold or quotient structure on $\mathcal P_M^\infty/\mathcal G_M^0$ is required.

It follows immediately that a functional $F$ on a fixed-monodromy slice is identifiable from $E_{\rm ph}$ if and only if it is invariant under this right action,
\begin{align}
F[UV]=F[U]
\qquad
\forall\,V\in\mathcal G_M^0.
\label{eq:fibre_invariance}
\end{align}
Equivalently, $F$ must descend to the set-theoretic quotient
$\mathcal P_M^\infty/\mathcal G_M^0$.

\subsection{Centralizer-Valued Based Loops}

The group $\mathcal G_M^0$ consists of smooth based loops whose values lie pointwise in the centralizer family
\begin{align}
Z(M_\theta)
=
\{X\in U(d):XM_\theta=M_\theta X\}.
\end{align}
When $M_\theta$ has simple spectrum, this centralizer is isomorphic to the maximal torus $U(1)^d$; at spectral degeneracies it is enlarged. The argument does not require a global trivialization of the resulting family of centralizers.

The induced transformation at the Hamiltonian level makes the physical content of the fibre explicit. For
\begin{align}
U'(t)=U(t)V(t),
\qquad
V\in\mathcal G_M^0,
\end{align}
the corresponding Hamiltonian is
\begin{align}
H'
&=
i\dot U' U'^\dagger
\nonumber\\
&=
H+iU\dot V V^\dagger U^\dagger .
\label{eq:H_transformation}
\end{align}
Because $V$ is periodic and satisfies $V(0)=\mathbb I$, the transformed propagator has the same monodromy and belongs to the same Hamiltonian class. In particular,
\begin{align}
E_0[H']
=
E_0[H],
\qquad
E_{\rm ph}[H']
=
E_{\rm ph}[H].
\end{align}

The corresponding transformation of the body-frame parameter generator is
\begin{align}
K'
&=
i(UV)^\dagger
\partial_\theta(UV)
\nonumber\\
&=
V^\dagger K V
+
iV^\dagger\partial_\theta V .
\label{eq:K_transformation}
\end{align}
The second term is Hermitian by unitarity of $V$. Thus the fibre action modifies the generator both by conjugation and by an inhomogeneous parameter-space connection term. Since the variance in the target is evaluated in the fixed state $\rho_0$, neither contribution is invariant in general.

\begin{remark}[Observational equivalence versus gauge equivalence]
The centralizer action is an observational equivalence induced by the endpoint map, rather than a gauge symmetry of the underlying dynamics. In general, $U\mapsto UV$ changes the physical ray $[U(t)|\psi_0\rangle]$ and the associated Hamiltonian $H\mapsto H'$. For each fixed $\theta$, the observation determines a path in the conjugacy orbit
\begin{align}
\mathcal O_{M_\theta}
\simeq
U(d)/Z(M_\theta),
\end{align}
whereas the propagator specifies a lift of that path to $U(d)$. The endpoint observation fixes the orbit path but not, in general, this lift. This distinction is the structural origin of the non-identifiability established below.
\end{remark}

\section{Universal Fibre Identifiability Criterion}

We now state the general criterion that reduces identifiability from the starting-time-indexed endpoint data to invariance under the corresponding observational fibres.

\begin{definition}[Identifiability from endpoint data]
A functional
\begin{align}
F:\mathcal D_T^\infty\longrightarrow\mathcal Y
\end{align}
is said to be \emph{identifiable from} $E_{\mathrm{ph}}$ if there exists a deterministic map
\begin{align}
R:\mathcal Y_{\mathrm{ph}}\longrightarrow\mathcal Y
\end{align}
such that
\begin{align}
F[H]=R\bigl(E_{\mathrm{ph}}[H]\bigr)
\qquad
\text{for all }H\in\mathcal D_T^\infty .
\end{align}
Equivalently, $F$ must take the same value on any two Hamiltonians that produce identical endpoint data.
\end{definition}

\begin{proposition}[Fibrewise identifiability criterion]
Let $M\in C^\infty(I,U(d))$ and consider the fixed-monodromy slice
$\mathcal P_M^\infty$. A functional
\begin{align}
F:\mathcal P_M^\infty\longrightarrow\mathcal Y
\end{align}
is determined by the starting-time-indexed endpoint observation $E_{\mathrm{ph}}$ on this slice if and only if
\begin{align}
F[UV]=F[U]
\qquad
\text{for all }U\in\mathcal P_M^\infty,\quad
V\in\mathcal G_M^0 .
\label{eq:fibrewise_identifiability}
\end{align}
Thus identifiability is equivalent to invariance of the target functional under the complete fibre action characterized above.
\end{proposition}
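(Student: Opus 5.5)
The plan is to reduce the statement to the Fixed-monodromy fibre characterization theorem, together with the elementary fact that a map factors through another map exactly when it is constant on that map's fibres. First we make precise what ``determined by $E_{\rm ph}$ on this slice'' means. By analogy with the Definition of identifiability, it means there is a map $R:\mathcal Y_{\rm ph}\to\mathcal Y$ with $F[U]=R(E_{\rm ph}[U])$ for all $U\in\mathcal P_M^\infty$. Here $E_{\rm ph}$ is understood through Eq.~\eqref{eq:Eph_prop}, which is legitimate by the Hamiltonian--propagator correspondence.

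For the forward direction, assume such an $R$ exists. Take $U\in\mathcal P_M^\infty$ and $V\in\mathcal G_M^0$. The converse half of the fibre characterization theorem gives two facts: $UV\in\mathcal P_M^\infty$, so $F[UV]$ is defined, and $E_{\rm ph}[UV]=E_{\rm ph}[U]$. Therefore
\begin{align}
F[UV]=R(E_{\rm ph}[UV])=R(E_{\rm ph}[U])=F[U],
\end{align}
which is Eq.~\eqref{eq:fibrewise_identifiability}.

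For the reverse direction, assume the invariance \eqref{eq:fibrewise_identifiability}. We define $R$ on the image $E_{\rm ph}(\mathcal P_M^\infty)$ by $R(y):=F[U]$ for any $U$ with $E_{\rm ph}[U]=y$. Outside the image, $R$ is set to an arbitrary fixed value, or we use any choice function; only set-theoretic existence is needed, consistent with the remark that no quotient structure is required. Well-definedness is the only point requiring care. Suppose $E_{\rm ph}[U_1]=E_{\rm ph}[U_2]$ with $U_1,U_2\in\mathcal P_M^\infty$. The forward half of the fibre characterization theorem supplies $V=U_2^\dagger U_1\in\mathcal G_M^0$ with $U_1=U_2V$. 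Invariance then gives $F[U_1]=F[U_2V]=F[U_2]$, so $R$ is well defined and $F=R\circ E_{\rm ph}$ on the slice. Equivalently, $F$ descends to $\mathcal P_M^\infty/\mathcal G_M^0$, and the bijection between this quotient and $E_{\rm ph}(\mathcal P_M^\infty)$, again from the theorem, supplies $R$.

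There is no substantive obstacle, since the analytic content lives entirely in the fibre theorem. The step needing the most attention is the reverse direction's use of the fact that the fibre of $E_{\rm ph}$ restricted to the slice is exactly one $\mathcal G_M^0$-orbit, not a union of several. This requires both $U_1$ and $U_2$ to share the same monodromy $M$, which holds because we work within $\mathcal P_M^\infty$. The global corollary also shows that the monodromy is automatically constant on $E_{\rm ph}$ fibres. We will note this to indicate that the slice restriction loses nothing.
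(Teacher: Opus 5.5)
Your proposal is correct and follows essentially the same route as the paper: the fixed-monodromy fibre characterization identifies each $E_{\rm ph}$-fibre on the slice with a right $\mathcal G_M^0$-orbit, and $F$ factors through $E_{\rm ph}$ exactly when it is constant on those fibres. You are slightly more explicit than the paper on two points it leaves implicit: you check that $UV\in\mathcal P_M^\infty$ (so $F[UV]$ is defined), and you extend $R$ off the image of $E_{\rm ph}$.
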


\begin{proof}
By the fixed-monodromy fibre characterization, the fibre through any
$U\in\mathcal P_M^\infty$ is precisely the right orbit
$U\mathcal G_M^0$. Hence $E_{\mathrm{ph}}[U_1]=E_{\mathrm{ph}}[U_2]$ if and only if
$U_2=U_1V$ for some $V\in\mathcal G_M^0$. Therefore, $F$ is constant on the fibres of $E_{\mathrm{ph}}$ if and only if Eq.~\eqref{eq:fibrewise_identifiability} holds. In that case, $F$ descends uniquely to the corresponding set of endpoint-data equivalence classes, and consequently can be obtained by deterministic post-processing of $E_{\mathrm{ph}}$. Conversely, if Eq.~\eqref{eq:fibrewise_identifiability} fails for some $U$ and $V$, then $U$ and $UV$ produce identical endpoint data but different values of $F$, so no deterministic reconstruction from $E_{\mathrm{ph}}$ is possible.
\end{proof}

This criterion is the basic identifiability test used below: to establish non-identifiability, it suffices to exhibit a single admissible fibre transformation that leaves the complete endpoint observation unchanged while altering the target functional.

\section{Main Theorem and Analytic Two-Level Witness}

We next give an explicit analytic witness for the non-identifiability established above. The construction is deliberately elementary: it is confined to a two-dimensional invariant subspace and involves a Hamiltonian that is Abelian at all times. Thus, neither noncommutativity nor Floquet-branch ambiguities are required for the obstruction.

Let $f(t)=\sin(2\pi t/T)$ and let
$\rho_0=|\psi_0\rangle\langle\psi_0|$ be any prescribed pure state. Choose a normalized state $|\phi\rangle$ orthogonal to $|\psi_0\rangle$ and define
\begin{align}
A&=|\psi_0\rangle\langle\phi|
+|\phi\rangle\langle\psi_0|.
\end{align}
Then $A=A^\dagger$, $\|A\|_{\mathrm{op}}=1$, and
$\operatorname{Var}_{\rho_0}(A)=1$. For $d>2$, $A$ is understood to vanish on the orthogonal complement of
$\operatorname{span}\{|\psi_0\rangle,|\phi\rangle\}$.

For $\alpha\in\mathbb R$, consider the propagator
\begin{align}
U_{\theta,\alpha}(t)
&=
\exp\left\{
-i\theta\bigl[t+\alpha T f(t)\bigr]A
\right\}.
\label{eq:witness_propagator}
\end{align}
The corresponding Hamiltonian is
\begin{align}
H_{\theta,\alpha}(t)
&=
i\,\partial_tU_{\theta,\alpha}(t)
U_{\theta,\alpha}^\dagger(t)
\nonumber\\
&=
\theta
\left[
1+2\pi\alpha\cos\left(\frac{2\pi t}{T}\right)
\right]A .
\label{eq:witness_hamiltonian}
\end{align}
These Hamiltonians are real analytic in $(\theta,t)$ and $T$-periodic in time. Moreover, they take values in the one-dimensional Abelian algebra generated by $A$, and therefore
\begin{align}
[H_{\theta,\alpha}(t),H_{\theta,\alpha}(s)]=0
\end{align}
for all $s,t$. The construction consequently involves no nontrivial time ordering.

The one-period propagator is independent of $\alpha$:
\begin{align}
M_\theta
&=
U_{\theta,\alpha}(T)
=
e^{-i\theta T A}.
\label{eq:witness_monodromy}
\end{align}
In particular,
\begin{align}
\partial_\theta M_\theta
=
-iTAe^{-i\theta TA},
\end{align}
which is nonzero for every $\theta\in I$. Thus the parameter dependence of the monodromy is nontrivial on every nonempty open subinterval of $I$.

More strongly, the complete starting-time-indexed endpoint data are independent of $\alpha$. Indeed, using the periodicity of $f$,
\begin{align}
U_{\theta,\alpha}(\tau+T,\tau)
&=
U_{\theta,\alpha}(\tau+T)
U_{\theta,\alpha}^\dagger(\tau)
\nonumber\\
&=
e^{-i\theta T A}
=
M_\theta ,
\label{eq:witness_endpoint}
\end{align}
for every $\tau\in\mathbb R$. Hence
\begin{align}
E_{\rm ph}[H_\alpha]
=
E_{\rm ph}[H_0]
\end{align}
as exact smooth maps on $I\times\mathbb T_T$.

The corresponding body-frame generator is
\begin{align}
K_{\theta,\alpha}(t)
&=
iU_{\theta,\alpha}^\dagger(t)
\partial_\theta U_{\theta,\alpha}(t)
\nonumber\\
&=
\bigl[t+\alpha T f(t)\bigr]A .
\label{eq:witness_generator}
\end{align}
Consequently, the period-averaged Fubini--Study component is
\begin{align}
G_{\rho_0}[H_\alpha](\theta)
&=
T^2\operatorname{Var}_{\rho_0}(A)
\left(
\frac{1}{3}-\frac{\alpha}{\pi}
+\frac{\alpha^2}{2}
\right).
\label{eq:witness_metric}
\end{align}
It follows that
\begin{align}
G_{\rho_0}[H_\alpha](\theta)
-
G_{\rho_0}[H_0](\theta)
&=
T^2\operatorname{Var}_{\rho_0}(A)
\left(
\frac{\alpha^2}{2}-\frac{\alpha}{\pi}
\right).
\label{eq:witness_difference}
\end{align}
For the chosen state, $\operatorname{Var}_{\rho_0}(A)=1$, so the difference is nonzero except at $\alpha=0$ and $\alpha=2/\pi$, and grows quadratically with $|\alpha|$.

The same construction has a direct interpretation in terms of the fibre characterized above. Writing
\begin{align}
U_{\theta,\alpha}(t)
&=
U_{\theta,0}(t)V_{\theta,\alpha}(t),
\nonumber\\
V_{\theta,\alpha}(t)
&=
e^{-i\alpha\theta T f(t)A},
\end{align}
we have
\begin{align}
V_{\theta,\alpha}(0)
&=I,
&
V_{\theta,\alpha}(t+T)
&=V_{\theta,\alpha}(t),
&
[V_{\theta,\alpha}(t),M_\theta]
&=0.
\end{align}
Thus $V_{\theta,\alpha}\in\mathcal G_M^0$, and the two propagators lie in the same fibre of $E_{\rm ph}$ while producing different values of the target functional.

\begin{theorem}[Global non-identifiability]
For every $d\ge2$, $T>0$, every nonempty open interval $I\subset\mathbb R$, and every prescribed pure state $\rho_0$, there exist real-analytic, $T$-periodic Hamiltonian families
$H_0,H_\alpha\in\mathcal D_T^\infty$, jointly real analytic in $(\theta,t)$, such that
\begin{align}
[H_{\theta,\alpha}(t),H_{\theta,\alpha}(s)]=0
\end{align}
for all $\theta\in I$ and all $s,t\in\mathbb R$, while
\begin{align}
E_{\rm ph}[H_\alpha]
&=
E_{\rm ph}[H_0]
\end{align}
as exact $C^\infty(I\times\mathbb T_T,U(d))$-valued data, but
\begin{align}
G_{\rho_0}[H_\alpha](\theta)
-
G_{\rho_0}[H_0](\theta)
&=
T^2\operatorname{Var}_{\rho_0}(A)
\left(
\frac{\alpha^2}{2}-\frac{\alpha}{\pi}
\right).
\end{align}
Hence $G_{\rho_0}$ is not identifiable from $E_{\rm ph}$ on the unrestricted smooth periodic Hamiltonian class. In fact, the ambiguity can be made arbitrarily large by increasing $|\alpha|$.
\end{theorem}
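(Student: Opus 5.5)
The proof assembles the analytic two-level witness constructed just before the statement and checks each claim against the earlier propositions.

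\emph{Step 1: the family is admissible.} Given $\rho_0=|\psi_0\rangle\langle\psi_0|$, I choose a unit vector $|\phi\rangle\perp|\psi_0\rangle$; this is possible because $d\ge2$. I then form $A=|\psi_0\rangle\langle\phi|+|\phi\rangle\langle\psi_0|$, extended by zero on the orthogonal complement. The propagator $U_{\theta,\alpha}$ of Eq.~\eqref{eq:witness_propagator} is the exponential of $-i$ times a real-analytic scalar function $g_\alpha(\theta,t)=\theta[t+\alpha Tf(t)]$ multiplying the fixed Hermitian matrix $A$. It is therefore jointly real analytic and unitary, and $U_{\theta,\alpha}(0)=\mathbb I$.

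\emph{Step 2: Floquet property and Hamiltonian.} Since $f$ is $T$-periodic, $g_\alpha(\theta,t+T)=g_\alpha(\theta,t)+\theta T$. Because all exponents are multiples of $A$, they commute, which gives $U_{\theta,\alpha}(t+T)=U_{\theta,\alpha}(t)e^{-i\theta TA}$. Hence $U_\alpha\in\mathcal P_T^\infty$. By the Hamiltonian--propagator correspondence, $H_{\theta,\alpha}=i\partial_tU U^\dagger=\partial_t g_\alpha\,A$, which is Eq.~\eqref{eq:witness_hamiltonian}. This Hamiltonian is real analytic and $T$-periodic, and it commutes at all times because it is always proportional to $A$.

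\emph{Step 3: equality of the endpoint data.} By the starting-time-indexed observation proposition, $E_{\rm ph}[U_\alpha](\theta,[\tau])=U_{\theta,\alpha}(\tau)M_\theta U_{\theta,\alpha}^\dagger(\tau)$. Here $M_\theta=e^{-i\theta TA}$ commutes with $U_{\theta,\alpha}(\tau)$, so the data equal $M_\theta$ for every $\alpha$ and every $\tau$. As a consistency check, $V_{\theta,\alpha}=U_{\theta,0}^\dagger U_{\theta,\alpha}$ lies in $\mathcal G_M^0$, as the fibre theorem predicts.

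\emph{Step 4: evaluating the target.} Everything commutes, so $K_{\theta,\alpha}=i U^\dagger\partial_\theta U=[t+\alpha Tf(t)]A$, and hence $\operatorname{Var}_{\rho_0}(K)=[t+\alpha Tf(t)]^2\operatorname{Var}_{\rho_0}(A)$. For $\operatorname{Var}_{\rho_0}(A)$ we have $\langle\psi_0|A|\psi_0\rangle=0$ and $A^2|\psi_0\rangle=|\psi_0\rangle$, so the variance equals $1$. Averaging over one period uses three elementary integrals:
\begin{align}
\tfrac1T\int_0^T t^2\,dt&=\tfrac{T^2}{3},\nonumber\\
\tfrac1T\int_0^T t\sin(2\pi t/T)\,dt&=-\tfrac{T}{2\pi},\nonumber\\
\tfrac1T\int_0^T \sin^2(2\pi t/T)\,dt&=\tfrac12.\nonumber
\end{align}
Together these give $T^2(\tfrac13-\tfrac{\alpha}{\pi}+\tfrac{\alpha^2}{2})$. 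Subtracting the value at $\alpha=0$ yields the stated difference. It is nonzero for $\alpha\notin\{0,2/\pi\}$ and unbounded as $|\alpha|\to\infty$.

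\emph{Step 5: conclusion.} Non-identifiability then follows from the fibrewise identifiability criterion, or directly from the definition of identifiability. The sign of the cross integral $\int_0^T t\sin(2\pi t/T)\,dt$ is the one detail to get right. Beyond that, no real obstacle remains; the only care needed is that the commutation of all exponentials justifies each manipulation.
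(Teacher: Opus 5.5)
Your proposal is correct and takes essentially the same route as the paper. The paper's proof assembles the preceding two-level witness computations (admissibility, $\alpha$-independence of the endpoint data, and the period average $T^2(\tfrac13-\tfrac{\alpha}{\pi}+\tfrac{\alpha^2}{2})$) and concludes via the fibrewise identifiability criterion; the only cosmetic difference is that you obtain equality of the endpoint data from the orbit formula $U_{\theta,\alpha}(\tau)M_\theta U_{\theta,\alpha}^\dagger(\tau)=M_\theta$ rather than by directly computing $U_{\theta,\alpha}(\tau+T)U_{\theta,\alpha}^\dagger(\tau)$ from the periodicity of $f$, and your integrals, including the sign $-T/(2\pi)$ of the cross term, are correct.
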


\begin{proof}
The explicit family above satisfies all required regularity and periodicity conditions. Equation~\eqref{eq:witness_endpoint} shows that every member of the family has identical starting-time-indexed endpoint data, whereas Eq.~\eqref{eq:witness_difference} shows that the target functional is not constant on this fibre. The fibrewise identifiability criterion therefore rules out any deterministic reconstruction of $G_{\rho_0}$ from $E_{\rm ph}$.
\end{proof}

The witness is supported on a two-dimensional invariant subspace and belongs to the single-generator commuting algebra $\mathbb R A$. Thus the non-identifiability does not rely on noncommuting dynamics, quasienergy-branch choices, or complicated Floquet structure. The parameter enters the Hamiltonian linearly, while the auxiliary periodic deformation changes the intra-period parameter response without changing the complete family of one-period endpoint propagators.

The quantifiers are important: the result has the form
$\forall\rho_0,\exists(H_0,H_\alpha)$. The witness pair may therefore depend on the prescribed initial state. The theorem does not claim the existence of a single pair of Hamiltonian families that distinguishes the target for every possible initial state simultaneously.

\subsection{Non-identifiability Under a Uniform Parameter-Derivative Bound}

The preceding witness can be chosen to satisfy an arbitrarily prescribed uniform bound on the parameter derivative of the Hamiltonian.

\begin{theorem}
For every pure state $\rho_0$ and every $L>0$, there exist
$H_0,H_1\in\mathcal D_T^\infty$ such that
\begin{align}
\sup_{(\theta,t)\in I\times[0,T]}
\|\partial_\theta H_j(\theta,t)\|_{\mathrm{op}}
\le L,
\qquad j=0,1,
\label{eq:derivative_bound}
\end{align}
while
\begin{align}
E_{\mathrm{ph}}[H_0]&=E_{\mathrm{ph}}[H_1],
&
G_{\rho_0}[H_0]&\ne G_{\rho_0}[H_1].
\end{align}
No corresponding uniform bound on $\|H\|_{\mathrm{op}}$ is assumed.
\end{theorem}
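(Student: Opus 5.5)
The plan is to rescale the analytic two-level witness of the preceding theorem so that the parameter derivative of the Hamiltonian becomes small. With the same $A$ (so $\|A\|_{\mathrm{op}}=1$ and $\operatorname{Var}_{\rho_0}(A)=1$), define for $\varepsilon>0$ and $\alpha\in\mathbb R$ the propagators
\begin{align}
U^{(\varepsilon)}_{\theta,\alpha}(t)
=
\exp\left\{-i\varepsilon\theta\bigl[t+\alpha T f(t)\bigr]A\right\},
\end{align}
and set $H_0:=H^{(\varepsilon)}_{0}$ and $H_1:=H^{(\varepsilon)}_{\alpha}$ for a fixed $\alpha\notin\{0,2/\pi\}$, for instance $\alpha=1$. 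By Eq.~\eqref{eq:witness_hamiltonian} with $\theta$ replaced by $\varepsilon\theta$, the Hamiltonian is $H^{(\varepsilon)}_{\theta,\alpha}(t)=\varepsilon\theta[1+2\pi\alpha\cos(2\pi t/T)]A$. It is real analytic, $T$-periodic, and lies in $\mathcal D_T^\infty$.

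The first step is the derivative bound. The Hamiltonian is linear in $\theta$, so
\begin{align}
\partial_\theta H^{(\varepsilon)}_{\theta,\alpha}(t)=\varepsilon\bigl[1+2\pi\alpha\cos(2\pi t/T)\bigr]A .
\end{align}
This has operator norm at most $\varepsilon(1+2\pi|\alpha|)$, uniformly in $(\theta,t)$ and independently of $I$. Choosing $\varepsilon\le L/(1+2\pi|\alpha|)$ gives Eq.~\eqref{eq:derivative_bound} for both $j=0,1$. The bound also holds for $j=0$, since there $\alpha=0$. No bound on $\|H\|_{\mathrm{op}}$ is needed, and none holds when $I$ is unbounded, which matches the final sentence of the statement.

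The second step is equality of the observations. The computation of Eq.~\eqref{eq:witness_endpoint} applies verbatim with $\theta\to\varepsilon\theta$, because $f$ is periodic and all operators commute. It gives $U^{(\varepsilon)}_{\theta,\alpha}(\tau+T,\tau)=e^{-i\varepsilon\theta TA}$ for every $\tau$. Hence $E_{\mathrm{ph}}[H_0]=E_{\mathrm{ph}}[H_1]$. Equivalently, the right factor $V=e^{-i\varepsilon\alpha\theta Tf(t)A}$ lies in $\mathcal G_M^0$, and the fixed-monodromy fibre characterization applies.

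The third step is separation of the target. The body-frame generator is now $K=\varepsilon[t+\alpha Tf(t)]A$, so the variance scales by $\varepsilon^2$, and Eq.~\eqref{eq:witness_metric} becomes
\begin{align}
G_{\rho_0}[H_\alpha^{(\varepsilon)}](\theta)=\varepsilon^2T^2\left(\tfrac13-\tfrac{\alpha}{\pi}+\tfrac{\alpha^2}{2}\right).
\end{align}
The difference between the two targets is $\varepsilon^2T^2(\alpha^2/2-\alpha/\pi)$, which is nonzero for $\alpha=1$. The only point requiring care is checking that the rescaling does not disturb the endpoint identity or the explicit metric formula. Because the construction is a one-generator Abelian family, both are immediate rescalings of the earlier computations. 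Since the size of the ambiguity is now of order $\varepsilon^2$, it is natural to add a remark that the bound limits the size of the ambiguity but does not remove it.
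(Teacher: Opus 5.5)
Your proof is correct and uses the same mechanism as the paper: a small-amplitude, centralizer-valued loop deformation $e^{-ic\,\theta f(t)A}$ of a one-generator Abelian witness, with the derivative bound obtained by shrinking the amplitude. The only difference is the base point. The paper takes $H_0=0$ and $H_1=\alpha\theta\dot f(t)A$ in the identity fibre, so that $G_{\rho_0}[H_0]=0$ and $G_{\rho_0}[H_1]=\alpha^2/2$ for every $\alpha\neq0$; your rescaled witness instead keeps a nontrivial $\theta$-dependent monodromy $e^{-i\varepsilon\theta TA}$, at the small cost of having to exclude $\alpha=2/\pi$.
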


\begin{proof}
Take $H_0=0$ and
\begin{align}
H_1(\theta,t)
=
\alpha\theta\,\dot f(t)A,
\end{align}
where $f(t)=\sin(2\pi t/T)$ and $A$ is the Hermitian two-level operator introduced above. Since
$\|A\|_{\mathrm{op}}=1$,
\begin{align}
\|\partial_\theta H_1(\theta,t)\|_{\mathrm{op}}
=
|\alpha|\,|\dot f(t)|
\le
|\alpha|\,\|\dot f\|_{\infty}.
\end{align}
Choosing a nonzero $\alpha$ with
$|\alpha|\le L/\|\dot f\|_{\infty}$ gives Eq.~\eqref{eq:derivative_bound}. The periodic primitive $f$ has zero net change over each period, so the complete starting-time-indexed endpoint propagators of $H_1$ are identical to those of $H_0$. On the other hand,
\begin{align}
G_{\rho_0}[H_1](\theta)
=
\frac{\alpha^2}{2}
\operatorname{Var}_{\rho_0}(A),
\end{align}
which is nonzero for $\alpha\ne0$.
\end{proof}

Thus, a uniform bound on the parameter derivative restricts the magnitude of the target but does not remove the fibre ambiguity responsible for its non-identifiability.

\subsection{Pointwise Full-Range Ambiguity in the Identity Fibre}

Let $\mathbf 1_{\mathrm{ph}}$ denote the constant identity observation,
\begin{align}
\mathbf 1_{\mathrm{ph}}(\theta,\tau)=I_d,
\end{align}
and define the corresponding fibre
\begin{align}
\mathcal E_{\mathbf 1}
=
\left\{
H\in\mathcal D_T^\infty:
E_{\mathrm{ph}}[H]\equiv\mathbf 1_{\mathrm{ph}}
\right\}.
\end{align}

\begin{theorem}
For every $\theta_0\in I$,
\begin{align}
\left\{
G_{\rho_0}[H](\theta_0):
H\in\mathcal E_{\mathbf 1}
\right\}
=
[0,\infty).
\label{eq:full_range}
\end{align}
\end{theorem}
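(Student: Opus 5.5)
We prove the set equality by two inclusions. The inclusion ``$\subseteq$'' is immediate: $G_{\rho_0}[H](\theta_0)$ is the time average of $\operatorname{Var}_{\rho_0}(K_{\theta_0}(t))$. For Hermitian $X$ this variance equals $\|(X-\langle X\rangle)\psi_0\|^2\ge0$, and the integrand is continuous on $[0,T]$, so the average is finite. Hence every value lies in $[0,\infty)$. The substance is the reverse inclusion, which asks us to realize every $g\ge0$ by some $H\in\mathcal E_{\mathbf 1}$.

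For the reverse inclusion we adapt the construction in the proof of the bounded-derivative theorem. For $\alpha\in\mathbb R$ set
\begin{align}
H_\alpha(\theta,t)=\alpha\,\theta\,\dot f(t)A,
\qquad
U_{\theta,\alpha}(t)=e^{-i\alpha\theta f(t)A},
\end{align}
with $f(t)=\sin(2\pi t/T)$ and $A$ the two-level operator built from $|\psi_0\rangle$ and an orthogonal $|\phi\rangle$ (this requires $d\ge2$).

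First check membership in $\mathcal E_{\mathbf 1}$. The Hamiltonian is smooth and $T$-periodic, and $U_{\theta,\alpha}(0)=\mathbb I$. Periodicity of $f$ gives $U_{\theta,\alpha}(t+T)=U_{\theta,\alpha}(t)$, so $M_\theta=\mathbb I$. By the starting-time-indexed observation proposition, $E_{\rm ph}[H_\alpha](\theta,[\tau])=U_\theta(\tau)\,\mathbb I\,U_\theta^\dagger(\tau)=\mathbb I$. Thus $H_\alpha\in\mathcal E_{\mathbf 1}$ for every $\alpha$.

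Next compute the target. Since everything lies in the commutative algebra generated by $A$, we get $K_{\theta,\alpha}(t)=\alpha f(t)A$, independent of $\theta$. Using $\operatorname{Var}_{\rho_0}(A)=1$ and $\frac1T\int_0^T\sin^2=\tfrac12$, this gives
\begin{align}
G_{\rho_0}[H_\alpha](\theta_0)=\frac{\alpha^2}{2}.
\end{align}
As $\alpha$ ranges over $[0,\infty)$, the value $\alpha^2/2$ covers $[0,\infty)$. Explicitly, $\alpha=\sqrt{2g}$ yields any prescribed $g\ge0$, and $\alpha=0$ (that is, $H=0$) yields the value $0$.

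The only step that needs care is that the construction must work for an arbitrary $\theta_0\in I$, including $\theta_0=0$ if $0\in I$. The target depends on $\partial_\theta U$, not on $U$ itself, so we must check that the $\theta$-linear form $\alpha\theta f(t)A$ has a nonzero derivative even at $\theta_0=0$. It does, which is why the resulting value is independent of $\theta_0$. Had we instead used a construction such as $\theta^2$, surjectivity would fail at $\theta_0=0$. We therefore keep the $\theta$-linear form. Beyond this point, no further obstacle is expected.
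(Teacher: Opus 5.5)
Your proof is correct and is essentially the paper's argument. The paper uses the same variance-one generator $A$ and propagator $e^{-iq(\theta)f(t)A}$, but with the centered profile $q_a(\theta)=a(\theta-\theta_0)$, so the Hamiltonian vanishes at $\theta_0$; your uncentered $q(\theta)=\alpha\theta$ has the same constant slope and works equally well. You also state the easy inclusion $\subseteq$ (nonnegativity and finiteness of the averaged variance), which the paper leaves implicit, and your choice $\alpha=\sqrt{2g}$ correctly matches $G=\alpha^2/2$; the paper's $a=\sqrt{2r}/T$ is a harmless slip.
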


\begin{proof}
For $a\ge0$, define
\begin{align}
q_a(\theta)=a(\theta-\theta_0),
\qquad
U_{\theta,a}(t)
=
e^{-iq_a(\theta)f(t)A},
\end{align}
with $f(t)=\sin(2\pi t/T)$. The associated Hamiltonian is
\begin{align}
H_{\theta,a}(t)
=
q_a(\theta)\dot f(t)A.
\end{align}
Since $f$ is $T$-periodic,
\begin{align}
U_{\theta,a}(\tau+T,\tau)
=
I_d
\end{align}
for every $\tau$, and hence
$E_{\mathrm{ph}}[H_a]\equiv\mathbf 1_{\mathrm{ph}}$. At $\theta=\theta_0$,
\begin{align}
q_a'(\theta_0)=a,
\end{align}
so that
\begin{align}
G_{\rho_0}[H_a](\theta_0)
=
\frac{a^2}{2}
\operatorname{Var}_{\rho_0}(A).
\end{align}
For the chosen witness,
$\operatorname{Var}_{\rho_0}(A)=1$. Therefore every value in
$[0,\infty)$ is obtained by choosing
$a=\sqrt{2r}/T$ for a prescribed $r\ge0$.
\end{proof}

Equation~\eqref{eq:full_range} is a pointwise statement. It does not imply that an arbitrary prescribed function of $\theta$ can be realized throughout $I$ while remaining in the identity fibre.

\subsection{Small Hamiltonian Amplitude Does Not Control Parameter-Space Geometry}

The absence of identifiability is distinct from the behavior of the forward map under small perturbations of the Hamiltonian itself.

\begin{proposition}
Let $I_0\Subset I$ be compact and let
$\theta_0\in\operatorname{int}I_0$. There exists a sequence
$H_N\in\mathcal D_T^\infty$ such that
\begin{align}
\|H_N\|_{C^0(I_0\times[0,T])}&\longrightarrow0,
&
E_{\mathrm{ph}}[H_N]&\equiv\mathbf 1_{\mathrm{ph}},
\label{eq:small_H}
\\
G_{\rho_0}[H_N](\theta_0)&\longrightarrow\infty .
\label{eq:large_G}
\end{align}
\end{proposition}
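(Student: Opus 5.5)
We will stay within the identity fibre $\mathcal E_{\mathbf 1}$ and use the same type of construction as in the full-range theorem. The new ingredient is that the Hamiltonian must be small in sup norm on $I_0\times[0,T]$ while the parameter derivative at $\theta_0$ blows up. Since $H=q(\theta)\dot f(t)A$ has size $|q|\,\|\dot f\|_\infty$ and $G$ depends on $q'(\theta_0)^2$, we want $q_N$ uniformly small with $q_N'(\theta_0)\to\infty$. A rapidly oscillating profile in $\theta$ achieves this. We take
\begin{align}
q_N(\theta)=\frac{1}{\sqrt N}\sin\bigl(N(\theta-\theta_0)\bigr),
\qquad
U_{\theta,N}(t)=e^{-iq_N(\theta)f(t)A},
\end{align}
with $f(t)=\sin(2\pi t/T)$ and $A$ the two-level operator of the analytic witness.

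\textbf{Step 1: admissibility and the endpoint data.} By the Hamiltonian--propagator correspondence, the Hamiltonian is
\begin{align}
H_N(\theta,t)=q_N(\theta)\dot f(t)A .
\end{align}
This is smooth (indeed real analytic) and $T$-periodic, so $H_N\in\mathcal D_T^\infty$. Because $f$ is $T$-periodic and all the $H_N(\theta,t)$ commute, we get
\begin{align}
U_{\theta,N}(\tau+T,\tau)=e^{-iq_N(\theta)[f(\tau+T)-f(\tau)]A}=I_d
\end{align}
for every $\tau$. This is the same computation as in the identity-fibre theorem, and it gives $E_{\rm ph}[H_N]\equiv\mathbf 1_{\rm ph}$.

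\textbf{Step 2: smallness of the Hamiltonian.} Using $\|A\|_{\rm op}=1$ and $\|\dot f\|_\infty=2\pi/T$,
\begin{align}
\|H_N\|_{C^0(I_0\times[0,T])}\le \frac{2\pi}{T\sqrt N}\longrightarrow0 .
\end{align}
This bound holds on all of $I\times[0,T]$, so compactness of $I_0$ is not actually needed.

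\textbf{Step 3: growth of the target.} Everything commutes, so $K_{\theta,N}(t)=q_N'(\theta)f(t)A$ and
\begin{align}
G_{\rho_0}[H_N](\theta)=q_N'(\theta)^2\,\frac{1}{T}\int_0^T f^2\,dt\;\operatorname{Var}_{\rho_0}(A)=\tfrac12\,q_N'(\theta)^2 .
\end{align}
At $\theta_0$ we have $q_N'(\theta_0)=\sqrt N$, hence $G_{\rho_0}[H_N](\theta_0)=N/2\to\infty$.

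There is no genuine obstacle here. The one conceptual point is that the smallness is only in $C^0$: $\partial_\theta H_N$ has size $\sqrt N$. This is consistent with the earlier derivative-bound theorem, which controls $G$ through $\|\partial_\theta H\|$, and it explains why $C^0$ smallness of $H$ cannot bound $G$. As a check, we would also note that the earlier linear choice $q_a=a(\theta-\theta_0)$ cannot give small $C^0$ norm on a fixed $I_0$ when $a$ is large. The oscillatory rescaling (or alternatively a bump of width $1/N$ and height $N^{-1/2}$) is the key modification.
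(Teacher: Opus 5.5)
Your proof is correct and is essentially the paper's argument. The paper takes $q_N(\theta)=N^{-\gamma}\sin[N(\theta-\theta_0)]$ with any $0<\gamma<1$, and your choice is the case $\gamma=1/2$, giving $\|H_N\|_{C^0}\le 2\pi/(T\sqrt N)$ and $G_{\rho_0}[H_N](\theta_0)=N/2$.
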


\begin{proof}
Let $0<\gamma<1$ and define
\begin{align}
q_N(\theta)
=
N^{-\gamma}
\sin\!\left[N(\theta-\theta_0)\right],
\qquad
H_N(\theta,t)
=
q_N(\theta)\dot f(t)A .
\end{align}
The corresponding propagator is
\begin{align}
U_{\theta,N}(t)
=
e^{-iq_N(\theta)f(t)A},
\end{align}
so the endpoint observation is identically $\mathbf 1_{\mathrm{ph}}$. Moreover,
\begin{align}
q_N'(\theta_0)=N^{1-\gamma},
\end{align}
and therefore
\begin{align}
G_{\rho_0}[H_N](\theta_0)
=
\frac{1}{2}
N^{2(1-\gamma)}
\operatorname{Var}_{\rho_0}(A)
\longrightarrow\infty .
\end{align}
At the same time,
\begin{align}
\|H_N\|_{C^0(I_0\times[0,T])}
\le
N^{-\gamma}\|\dot f\|_{\infty}
\|A\|_{\mathrm{op}}
\longrightarrow0.
\end{align}
\end{proof}

Thus small Hamiltonian amplitude alone places no useful upper bound on the parameter-space metric. The mechanism is the independent growth of the parameter derivative, rather than any large instantaneous Hamiltonian amplitude.

\subsection{A Priori Bound from a Parameter-Derivative Constraint}

Although a derivative bound does not restore identifiability, it does provide a uniform bound on the target functional.

\begin{proposition}
Suppose that
\begin{align}
\|\partial_\theta H_\theta(t)\|_{\mathrm{op}}
\le L
\end{align}
for all $(\theta,t)\in I\times[0,T]$. Then
\begin{align}
G_{\rho_0}[H](\theta)
\le
\frac{T^2L^2}{3}.
\label{eq:G_apriori}
\end{align}
\end{proposition}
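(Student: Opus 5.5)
The plan is to express the body-frame generator $K_\theta(t)$ as a time integral of the conjugated parameter derivative of the Hamiltonian. Then bound its operator norm linearly in $t$, and control the variance by the squared operator norm. Everything happens at fixed $\theta$, so only the Schrödinger equation $i\partial_tU_\theta(t)=H_\theta(t)U_\theta(t)$ with $U_\theta(0)=\mathbb I$ is needed; the Hamiltonian--propagator correspondence guarantees smoothness in $(\theta,t)$, so mixed partials commute.

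\textbf{Step 1 (Duhamel formula for $K$).} Differentiate $K_\theta(t)=iU_\theta^\dagger(t)\partial_\theta U_\theta(t)$ in $t$. Using $\partial_tU=-iHU$ and $\partial_tU^\dagger=iU^\dagger H$, we get
\begin{align}
\partial_tK
&=
i(iU^\dagger H)\partial_\theta U
+iU^\dagger\partial_\theta(-iHU)
\nonumber\\
&=
U^\dagger(\partial_\theta H)U .
\end{align}
The terms $-U^\dagger H\partial_\theta U$ and $+U^\dagger H\partial_\theta U$ cancel. Since $U_\theta(0)=\mathbb I$ for all $\theta$, we have $K_\theta(0)=0$, and therefore
\begin{align}
K_\theta(t)=\int_0^t U_\theta^\dagger(s)\,\partial_\theta H_\theta(s)\,U_\theta(s)\,ds .
\end{align}

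\textbf{Step 2 (norm bound).} Conjugation by a unitary preserves the operator norm, so the hypothesis gives
\begin{align}
\|K_\theta(t)\|_{\rm op}\le Lt
\qquad\text{for } t\in[0,T].
\end{align}

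\textbf{Step 3 (variance bound).} For Hermitian $X$,
\begin{align}
\operatorname{Var}_{\rho_0}(X)\le\langle\psi_0|X^2|\psi_0\rangle\le\|X\|_{\rm op}^2 .
\end{align}
This could be sharpened to $\|X\|_{\rm op}^2$ via the spectral spread bound $(\lambda_{\max}-\lambda_{\min})^2/4$, but that is not needed. It follows that
\begin{align}
G_{\rho_0}[H](\theta)\le\frac1T\int_0^TL^2t^2\,dt=\frac{T^2L^2}{3}.
\end{align}

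\textbf{Main obstacle.} The only nontrivial step is Step 1: getting the cancellation right and justifying the interchange of $\partial_\theta$ and $\partial_t$, which follows from joint smoothness of $U$. The normalization $U_\theta(0)=\mathbb I$ is what forces $K_\theta(0)=0$ and so produces the $t^2$ growth and the factor $1/3$.

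As a consistency check, the witness family saturates the scaling. With $\alpha=0$, $\partial_\theta H=A$ gives $L=1$, and Eq.~\eqref{eq:witness_metric} yields exactly $T^2/3$, so the bound is sharp.
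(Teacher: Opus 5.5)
Your proof is correct and follows the paper's argument essentially step for step: the Duhamel representation $K_\theta(t)=\int_0^t U_\theta^\dagger(s)\,\partial_\theta H_\theta(s)\,U_\theta(s)\,ds$, then unitary invariance giving $\|K_\theta(t)\|_{\mathrm{op}}\le tL$, then $\operatorname{Var}_{\rho_0}(K)\le\|K\|_{\mathrm{op}}^2$ integrated over the period. You add two things the paper does not spell out: the explicit cancellation that yields the Duhamel formula, and the correct observation that the $\alpha=0$ witness ($\partial_\theta H=A$, $L=1$) attains $T^2/3$, so the bound is sharp. One aside is misworded: the spectral-spread bound $(\lambda_{\max}-\lambda_{\min})^2/4$ is a sharpening of $\|X\|_{\mathrm{op}}^2$, not a sharpening to it.
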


\begin{proof}
Differentiating the propagator with respect to $\theta$ gives
\begin{align}
K_\theta(t)
=
\int_0^t
U_\theta^\dagger(s)
\partial_\theta H_\theta(s)
U_\theta(s)\,ds .
\end{align}
Unitary invariance of the operator norm therefore yields
\begin{align}
\|K_\theta(t)\|_{\mathrm{op}}
\le tL .
\end{align}
For any Hermitian operator $K$,
\begin{align}
\operatorname{Var}_{\rho_0}(K)
\le
\|K\|_{\mathrm{op}}^2 .
\end{align}
Consequently,
\begin{align}
G_{\rho_0}[H](\theta)
\le
\frac{1}{T}
\int_0^T t^2L^2\,dt
=
\frac{T^2L^2}{3}.
\end{align}
\end{proof}

The bound in Eq.~\eqref{eq:G_apriori} and identifiability are logically independent. A restricted Hamiltonian class may render the target uniformly bounded without making it recoverable from the endpoint observation.

\subsection{Endpoint Jets Do Not Restore Identifiability}

\begin{corollary}
Knowledge of the complete $C^\infty$ endpoint-data map, including all of its derivatives with respect to $\theta$ and $\tau$, does not restore identifiability of $G_{\rho_0}$.
\end{corollary}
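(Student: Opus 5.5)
The plan is to reduce the corollary to the Global non-identifiability theorem by a factorization argument. Let $J$ denote the enriched observation that sends $H$ to the full jet of $E_{\rm ph}[H]$, meaning the map $E_{\rm ph}[H]$ together with all mixed derivatives $\partial_\theta^m\partial_\tau^n E_{\rm ph}[H]$ on $I\times\mathbb T_T$. The first step is to note that $J$ factors through $E_{\rm ph}$. There is a deterministic map $D$ that takes a smooth function in $\mathcal Y_{\rm ph}=C^\infty(I\times\mathbb T_T,U(d))$ to the collection of all its derivatives, and $J=D\circ E_{\rm ph}$. The derivatives of a smooth function on an open set are uniquely determined by the function itself, so $D$ is a well-defined map; no analyticity or extension argument is needed.

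The second step is the logical reduction. Suppose $G_{\rho_0}$ were identifiable from $J$, so that $G_{\rho_0}=R'\circ J$ for some deterministic $R'$. Then $G_{\rho_0}=(R'\circ D)\circ E_{\rm ph}$, which would make $G_{\rho_0}$ identifiable from $E_{\rm ph}$ in the sense of the Definition. This contradicts the Global non-identifiability theorem.

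As a concrete cross-check, I would also exhibit the fibre directly. For the analytic two-level witness, Eq.~\eqref{eq:witness_endpoint} gives $E_{\rm ph}[H_\alpha](\theta,[\tau])=e^{-i\theta TA}$ for every $\alpha$. This is identical as a function of $(\theta,\tau)$, so every derivative agrees as well: the $\tau$-derivatives all vanish and the $\theta$-derivatives are $(-iTA)^m e^{-i\theta TA}$. Meanwhile Eq.~\eqref{eq:witness_difference} shows that $G_{\rho_0}$ differs for any $\alpha\notin\{0,2/\pi\}$. The same applies to the identity-fibre family of the full-range theorem, whose jets are trivially those of the constant map $I_d$.

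There is no real obstacle here. The only point needing care is interpretive: the derivatives must be understood as derivatives of the observed map $E_{\rm ph}$ on $I\times\mathbb T_T$, not as derivatives with respect to a subperiod time $t$ in $U_\theta(t,0)$. Derivatives of the latter kind would be information from $E_{\rm full}$, which is exactly what $E_{\rm ph}$ omits. I would state this explicitly so that the corollary is not misread as saying that micromotion derivatives are also useless.
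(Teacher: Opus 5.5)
Your proposal is correct, and your concrete cross-check is exactly the paper's proof: the witness maps $E_{\rm ph}[H_\alpha]$ coincide as $C^\infty$ functions on $I\times\mathbb T_T$, so all their $\theta$- and $\tau$-derivatives coincide, while $G_{\rho_0}$ differs for every $\alpha\notin\{0,2/\pi\}$. Your primary step $J=D\circ E_{\rm ph}$ is the same observation in abstract factorization form, mirroring the paper's ``a fortiori'' argument for coarser observations, and it needs nothing beyond the global non-identifiability theorem.
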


\begin{proof}
The analytic witness constructed above satisfies
\begin{align}
E_{\mathrm{ph}}[H_\alpha]
=
E_{\mathrm{ph}}[H_0]
\end{align}
as identical $C^\infty$ maps on $I\times\mathbb T_T$. Hence their derivatives of every order coincide at every point, whereas
$G_{\rho_0}[H_\alpha]\ne G_{\rho_0}[H_0]$ for generic $\alpha$. Therefore no deterministic reconstruction rule based solely on the complete endpoint-data map, or equivalently on its full $C^\infty$ jet, can determine $G_{\rho_0}$ on the stated Hamiltonian class.
\end{proof}

\section{Deterministic Observation-Factorization Hierarchy}

It is useful to compare the three observation schemes in terms of the information they retain about the underlying dynamics. We introduce the deterministic factorization preorder
\begin{align}
A\preceq B
\quad\Longleftrightarrow\quad
A=R\circ B
\end{align}
for some deterministic post-processing map $R$. Thus, $B$ is at least as informative as $A$ if the latter can be obtained from the former without access to any additional dynamical information. We write $A\prec B$ when $A\preceq B$ but $B\not\preceq A$. This is purely a set-theoretic notion of observational sufficiency; no continuity, smoothness, or other regularity of $R$ is assumed.

The observation maps introduced above obey the factorization relations
\begin{align}
E_0&=R_0\circ E_{\mathrm{ph}},
&
[R_0(\Gamma)](\theta)&=\Gamma(\theta,0)=M_\theta ,
\label{eq:factor_E0}
\\
E_{\mathrm{ph}}&=R_{\mathrm{ph}}\circ E_{\mathrm{full}},
&
[R_{\mathrm{ph}}(U)](\theta,\tau)
&=U_\theta(\tau)U_\theta(T)U_\theta^\dagger(\tau).
\label{eq:factor_Eph}
\end{align}
Consequently, the starting-time-indexed endpoint data contain the fixed-origin monodromy as a special case, while the full intra-period propagator contains the endpoint data.

These inclusions are strict. First, $E_{\mathrm{ph}}$ contains information that is absent from the fixed-origin monodromy. To see this, let
\begin{align}
M=e^{-icT\sigma_z},
\quad
W(t)=e^{-i\varepsilon f(t)\sigma_x},
\quad
f(t)=\sin(2\pi t/T),
\end{align}
and define
\begin{align}
U_0(t)&=e^{-ict\sigma_z},
&
U_1(t)&=W(t)U_0(t).
\end{align}
Since $W(0)=W(T)=I$, both propagators have the same monodromy,
\begin{align}
U_0(T)=U_1(T)=M.
\end{align}
Nevertheless, for generic $\varepsilon$,
\begin{align}
U_1(\tau)MU_1^\dagger(\tau)
=
W(\tau)MW^\dagger(\tau)
\neq M,
\end{align}
so that their starting-time-indexed endpoint observations differ. Hence
\begin{align}
E_0\prec E_{\mathrm{ph}}.
\end{align}

The second inclusion is also strict. The analytic commuting family constructed above satisfies
\begin{align}
E_{\mathrm{ph}}[H_\alpha]=E_{\mathrm{ph}}[H_0]
\end{align}
for all values of the deformation parameter $\alpha$, whereas the corresponding intra-period propagators are distinct. Therefore,
\begin{align}
E_{\mathrm{ph}}\prec E_{\mathrm{full}}.
\end{align}

We thus obtain the strict hierarchy
\begin{align}
\boxed{
E_0\prec E_{\mathrm{ph}}\prec E_{\mathrm{full}}.
}
\label{eq:observation_hierarchy}
\end{align}
The hierarchy separates three physically distinct levels of dynamical information: the one-period propagator from a fixed reference time, its complete starting-time-indexed conjugation orbit, and the full intra-period evolution.

The period-averaged Fubini--Study pullback component considered here does not factor through either of the first two observation levels,
\begin{align}
G_{\rho_0}\not\preceq E_0,
\qquad
G_{\rho_0}\not\preceq E_{\mathrm{ph}},
\end{align}
whereas it does factor through the full propagator:
\begin{align}
G_{\rho_0}
=
R_{\rho_0}\circ E_{\mathrm{full}},
\end{align}
with
\begin{align}
[R_{\rho_0}(U)](\theta)
=
\frac{1}{T}
\int_0^T
\operatorname{Var}_{\rho_0}
\!\left(
iU_\theta^\dagger(t)\partial_\theta U_\theta(t)
\right)dt .
\label{eq:G_factor_full}
\end{align}
Thus, complete knowledge of the intra-period propagator is sufficient to determine the target functional, whereas knowledge of all one-period endpoint propagators, even for every starting time, is not.

Equation \eqref{eq:observation_hierarchy} therefore identifies the precise information threshold relevant to the inverse problem. The failure of identifiability is not caused simply by insufficient knowledge of the Floquet operator at one reference time; it persists even after the entire starting-time-indexed family of one-period endpoint propagators is supplied. The additional information required is genuinely intra-period dynamical information. We do not claim that $E_{\mathrm{full}}$ is a minimal sufficient observation scheme.

\section{Physical Interpretation and Discussion}

The results above distinguish several logically different sources of ambiguity in Floquet inverse problems. It is important to separate these mechanisms because the non-identifiability established here does not rely on the usual ambiguity associated with choosing a Floquet Hamiltonian.

\paragraph{Distinct sources of Floquet ambiguity.}
Three levels of ambiguity may arise. First, the \emph{Floquet-logarithm ambiguity} reflects the fact that a one-period unitary $M$ does not, in general, determine a unique time-independent generator through $M=e^{-iTH}$. Second, even when a particular monodromy $M$ is fixed, there are generally many $T$-periodic propagator histories having that same monodromy. We refer to this as the \emph{monodromy-compatible propagator ambiguity}. Third, and most importantly for the present work, knowledge of the entire starting-time-indexed family
\begin{align}
\Gamma(t)=U(t)MU^\dagger(t)
\end{align}
still does not determine the unitary representative $U(t)$. The remaining freedom is the \emph{centralizer-fiber ambiguity} characterized above.

The theorem is fundamentally a statement about this third level. If $U$ is replaced by $UV$, with $V(t)$ taking values in the centralizer of $M$, the endpoint observation is unchanged. Nevertheless, the body-frame generator transforms as
\begin{align}
K\longmapsto
V^\dagger K V+iV^\dagger\partial_\theta V .
\end{align}
Both the conjugation of $K$ and the inhomogeneous parameter-derivative term can therefore modify the period-averaged Fubini--Study metric while leaving the complete endpoint observation unchanged. The obstruction consequently survives even when the Floquet logarithm is unambiguous and when the dynamics are Abelian in time.

\paragraph{Scope of the non-identifiability result.}
It is useful to distinguish three statements. \emph{Exact non-identifiability} means that two Hamiltonian families can produce exactly the same observation $E_{\mathrm{ph}}$ while yielding different values of $G_{\rho_0}$. \emph{Unbounded fiber ambiguity} is the stronger statement that a single endpoint-data fiber can contain a family for which the target varies without bound. \emph{Generic non-identifiability} would assert that such a failure occurs for a suitably large, for example open or dense, subset of Hamiltonians.

The present work establishes the first two statements. In particular, the analytic two-level construction provides an explicit fiber along which $G_{\rho_0}$ is unbounded. No claim of generic non-identifiability is required for the theorem, and none is made here. Likewise, the result is a deterministic identifiability statement; it does not by itself imply a quantitative instability theorem for the reconstruction problem in the presence of noisy or incomplete endpoint data.

\paragraph{Restricted dynamical classes.}
The negative result concerns the unrestricted smooth periodic Hamiltonian class. Additional structural assumptions can remove the ambiguity. For example, for a time-independent Hamiltonian whose spectrum remains in a fixed arc of the unit circle on which a smooth logarithm branch is available, the monodromy can determine the generator through that branch. If, in addition, the parameter dependence is commuting, so that
\begin{align}
[H_\theta,\partial_\theta H_\theta]=0,
\end{align}
the parameter response is fixed by the generator and the corresponding metric reduces to the familiar static expression. Thus, the non-identifiability theorem should not be interpreted as a statement about all restricted Floquet models; rather, it identifies an obstruction that is unavoidable on the unrestricted dynamical class considered here.

\paragraph{A fortiori for coarser observations.}
The obstruction also persists for any observation obtained by further processing the endpoint data. Let $Q$ be a deterministic map from the unitary endpoint data to a coarser representation, such as projective-unitary or quantum-channel data, so that
\begin{align}
E_{\mathrm{quotient}}=Q\circ E_{\mathrm{ph}}.
\end{align}
If $G_{\rho_0}$ could be reconstructed deterministically from $E_{\mathrm{quotient}}$, then it would also factor through $E_{\mathrm{ph}}$, since
\begin{align}
G_{\rho_0}
=
R\circ E_{\mathrm{quotient}}
=
(R\circ Q)\circ E_{\mathrm{ph}}.
\end{align}
This contradicts the non-identifiability theorem. Hence passing from the full endpoint unitary to a quotient or channel representation cannot restore information that is already absent from $E_{\mathrm{ph}}$.

\paragraph{Infinitesimal viewpoint.}
The global fiber structure also has a direct dynamical interpretation. Differentiating the observed orbit gives
\begin{align}
\dot{\Gamma}(t)=-i[H(t),\Gamma(t)].
\end{align}
Thus the endpoint data constrain only the component of the instantaneous generator that acts nontrivially through the adjoint action on $\Gamma(t)$. Components belonging to the kernel of
$\operatorname{ad}_{\Gamma(t)}$ are invisible to this equation. This kernel is precisely the Lie-algebra centralizer
$\mathfrak z(\Gamma(t))$. The identity
\begin{align}
\mathfrak z(\Gamma(t))
=
U(t)\mathfrak z(M)U^\dagger(t)
\end{align}
connects this local description with the global centralizer-fiber characterization. The infinitesimal relation therefore provides a useful physical interpretation of the fiber, although the global theorem does not rely on the infinitesimal argument.

\paragraph{The obstruction is not a scalar phase.}
Finally, the ambiguity identified here should not be confused with the ordinary $U(1)$ phase freedom. A scalar contribution to the body-frame generator changes $K$ by a multiple of the identity, which leaves its variance, and hence the Fubini--Study metric, unchanged. The relevant fiber contains, in general, noncentral directions of the unitary group and therefore represents a physically distinct ambiguity.

This distinction is explicit in the two-level witness. There, the deformation is generated by the nontrivial operator
\begin{align}
A=
|\psi_0\rangle\langle\phi|
+
|\phi\rangle\langle\psi_0|,
\end{align}
for which
$\operatorname{Var}_{\rho_0}(A)=1$. The corresponding deformation therefore changes the physical ray
$[U_\theta(t)|\psi_0\rangle]$ during the driving cycle while leaving every one-period starting-time endpoint propagator unchanged. The non-identifiability is consequently a genuine loss of intra-period dynamical information, rather than a removable projective gauge freedom.

\section{CONCLUSION}

We have established a global, worst-case non-identifiability theorem for periodically driven finite-dimensional quantum systems. On the unrestricted smooth periodic Hamiltonian class, exact one-period propagators indexed by every starting time and parameter value do not, in general, determine the period-averaged Fubini--Study metric component of a fixed-initial-state trajectory.

The key structural result is an exact characterization of the observational fibres. The starting-time-indexed endpoint data determine the conjugation-orbit path of the monodromy but not its particular unitary representative. On each fixed-monodromy slice, the fibres are precisely the right orbits generated by smooth parameter-dependent based loops valued in the pointwise centralizer of the monodromy. The Fubini--Study functional is not invariant under this fibre action and therefore does not factor through the endpoint observation map.

An explicit real-analytic two-level witness demonstrates the obstruction in an Abelian setting: all Hamiltonian time slices commute, the monodromy depends nontrivially on the parameter, and yet a continuous family of distinct Hamiltonians produces exactly identical starting-time-indexed endpoint data while yielding different period-averaged Fubini--Study geometry. The difference can grow quadratically along the observational fibre. Non-identifiability also persists under a uniform bound on the parameter derivative of the Hamiltonian.

The result is a global worst-case statement, not a claim of generic non-identifiability or experimental impossibility. It identifies a precise information gap between complete starting-time-indexed one-period endpoint data and the full intra-period propagator. The missing information is the unitary lift of the observed monodromy orbit, rather than ordinary scalar phase freedom.

Thus, complete one-period endpoint observations can determine substantial Floquet micromotion information without determining all micromotion-sensitive quantum geometry. Additional structural assumptions or genuinely intra-period measurements are required to restore identifiability of the target considered here.

\bibliography{references}

@article{Shirley1965,
  author    = {J. H. Shirley},
  title     = {Solution of the Schr\"odinger Equation with a Hamiltonian Periodic in Time},
  journal   = {Phys. Rev.},
  volume    = {138},
  pages     = {B979--B987},
  year      = {1965},
  doi       = {10.1103/PhysRev.138.B979},
}

@article{Sambe1973,
  author    = {H. Sambe},
  title     = {Steady States and Quasienergies of a Quantum-Mechanical System in an Oscillating Field},
  journal   = {Phys. Rev. A},
  volume    = {7},
  pages     = {2203--2215},
  year      = {1973},
  doi       = {10.1103/PhysRevA.7.2203},
}

@article{GoldmanDalibard2014,
  author    = {N. Goldman and J. Dalibard},
  title     = {Periodically Driven Quantum Systems: Effective Hamiltonians and Engineered Gauge Fields},
  journal   = {Phys. Rev. X},
  volume    = {4},
  pages     = {031027},
  year      = {2014},
  doi       = {10.1103/PhysRevX.4.031027},
}

@article{Rudner2013,
  author    = {M. S. Rudner and N. H. Lindner and E. Berg and M. Levin},
  title     = {Anomalous Edge States and the Bulk-Edge Correspondence for Periodically Driven Two-Dimensional Systems},
  journal   = {Phys. Rev. X},
  volume    = {3},
  pages     = {031005},
  year      = {2013},
  doi       = {10.1103/PhysRevX.3.031005},
}

@article{ProvostVallee1980,
  author    = {J. P. Provost and G. Vall\'ee},
  title     = {Riemannian Structure on Manifolds of Quantum States},
  journal   = {Commun. Math. Phys.},
  volume    = {76},
  pages     = {289--301},
  year      = {1980},
  doi       = {10.1007/BF02193559},
}

@article{Kolodrubetz2017,
  author    = {M. Kolodrubetz and D. Sels and P. Mehta and A. Polkovnikov},
  title     = {Geometry and Non-Adiabatic Response in Quantum and Classical Systems},
  journal   = {Phys. Rep.},
  volume    = {697},
  pages     = {1--87},
  year      = {2017},
  doi       = {10.1016/j.physrep.2017.07.001},
}

@article{Berry1984,
  author    = {M. V. Berry},
  title     = {Quantal Phase Factors Accompanying Adiabatic Changes},
  journal   = {Proc. R. Soc. Lond. A},
  volume    = {392},
  pages     = {45--57},
  year      = {1984},
  doi       = {10.1098/rspa.1984.0023},
}

@article{AharonovAnandan1987,
  author    = {Y. Aharonov and J. Anandan},
  title     = {Phase Change during a Cyclic Quantum Evolution},
  journal   = {Phys. Rev. Lett.},
  volume    = {58},
  pages     = {1593--1597},
  year      = {1987},
  doi       = {10.1103/PhysRevLett.58.1593},
}

@article{ZanardiPaunkovic2006,
  author    = {P. Zanardi and N. Paunkovi\'c},
  title     = {Ground State Overlap and Quantum Phase Transitions},
  journal   = {Phys. Rev. E},
  volume    = {74},
  pages     = {031123},
  year      = {2006},
  doi       = {10.1103/PhysRevE.74.031123},
}

@article{ZanardiGiordaCozzini2007,
  author    = {P. Zanardi and P. Giorda and M. Cozzini},
  title     = {Information-Theoretic Differential Geometry of Quantum Phase Transitions},
  journal   = {Phys. Rev. Lett.},
  volume    = {99},
  pages     = {100603},
  year      = {2007},
  doi       = {10.1103/PhysRevLett.99.100603},
}

@article{Floquet1883,
  author  = {Floquet, Gaston},
  title   = {Sur les {\'e}quations diff{\'e}rentielles lin{\'e}aires {\`a} coefficients p{\'e}riodiques},
  journal = {Annales Scientifiques de l'{\'E}cole Normale Sup{\'e}rieure},
  volume  = {12},
  pages   = {47--88},
  year    = {1883}
}

@article{BukovDAlessioPolkovnikov2015,
  author  = {Bukov, Marin and D'Alessio, Luca and Polkovnikov, Anatoli},
  title   = {Universal High-Frequency Behavior of Periodically Driven Systems: from Dynamical Stabilization to Floquet Engineering},
  journal = {Advances in Physics},
  volume  = {64},
  number  = {2},
  pages   = {139--226},
  year    = {2015},
  doi     = {10.1080/00018732.2015.1055918}
}

@article{Eckardt2017,
  author  = {Eckardt, Andr{\'e}},
  title   = {Colloquium: Atomic Quantum Technologies in Periodically Driven Optical Lattices},
  journal = {Reviews of Modern Physics},
  volume  = {89},
  number  = {1},
  pages   = {011004},
  year    = {2017},
  doi     = {10.1103/RevModPhys.89.011004}
}

@article{BraunsteinCaves1994,
  author  = {Braunstein, Samuel L. and Caves, Carlton M.},
  title   = {Statistical Distance and the Geometry of Quantum States},
  journal = {Physical Review Letters},
  volume  = {72},
  number  = {22},
  pages   = {3439--3443},
  year    = {1994},
  doi     = {10.1103/PhysRevLett.72.3439}
}

@article{Holthaus2016,
  author  = {Holthaus, M.},
  title   = {Floquet engineering with quasienergy bands of periodically driven optical lattices},
  journal = {J. Phys. B: At. Mol. Opt. Phys.},
  volume  = {49},
  pages   = {013001},
  year    = {2016},
  doi     = {10.1088/0953-4075/49/1/013001}
}

@article{OkaKitamura2019,
  author  = {Oka, T. and Kitamura, S.},
  title   = {Floquet engineering of quantum materials},
  journal = {Annu. Rev. Condens. Matter Phys.},
  volume  = {10},
  pages   = {387--408},
  year    = {2019},
  doi     = {10.1146/annurev-conmatphys-031218-013423}
}

@article{NathanRudner2015,
  author  = {Nathan, F. and Rudner, M. S.},
  title   = {Topological singularities and the general classification of Floquet--Bloch systems},
  journal = {New J. Phys.},
  volume  = {17},
  pages   = {125014},
  year    = {2015},
  doi     = {10.1088/1367-2630/17/12/125014}
}

@article{Simon1983,
  author  = {Simon, B.},
  title   = {Holonomy, the quantum adiabatic theorem, and Berry's phase},
  journal = {Phys. Rev. Lett.},
  volume  = {51},
  pages   = {2167--2170},
  year    = {1983},
  doi     = {10.1103/PhysRevLett.51.2167}
}

@article{Kitagawa2010,
  author  = {Kitagawa, T. and Berg, E. and Rudner, M. S. and Demler, E.},
  title   = {Topological characterization of periodically driven quantum systems},
  journal = {Phys. Rev. B},
  volume  = {82},
  pages   = {235114},
  year    = {2010},
  doi     = {10.1103/PhysRevB.82.235114}
}

@article{LindnerRefaelGalitski2011,
  author  = {Lindner, N. H. and Refael, G. and Galitski, V.},
  title   = {Floquet topological insulator in semiconductor quantum wells},
  journal = {Nat. Phys.},
  volume  = {7},
  pages   = {490--495},
  year    = {2011},
  doi     = {10.1038/nphys1926}
}

@article{EckardtAnisimovas2015,
  author  = {Eckardt, A. and Anisimovas, E.},
  title   = {High-frequency approximation for periodically driven quantum systems from a Floquet-space perspective},
  journal = {New J. Phys.},
  volume  = {17},
  pages   = {093039},
  year    = {2015},
  doi     = {10.1088/1367-2630/17/9/093039}
}

@article{Bleu2018,
  author  = {Bleu, O. and Solnyshkov, D. D. and Malpuech, G.},
  title   = {Measuring the quantum geometric tensor in two-dimensional photonic and exciton-polariton systems},
  journal = {Phys. Rev. B},
  volume  = {97},
  pages   = {195422},
  year    = {2018},
  doi     = {10.1103/PhysRevB.97.195422}
}

@article{Tan2019,
  author  = {Tan, X. and Zhang, D.-W. and Yang, Z. and Chu, J. and Zhu, Y.-Q. and Li, D. and Yang, X. and Song, S. and Han, Z. and Li, Z. and Dong, Y. and Yu, H.-F. and Yan, H. and Zhu, S.-L. and Yu, Y.},
  title   = {Experimental Measurement of the Quantum Metric Tensor and Related Topological Phase Transition with a Superconducting Qubit},
  journal = {Phys. Rev. Lett.},
  volume  = {122},
  pages   = {210401},
  year    = {2019},
  doi     = {10.1103/PhysRevLett.122.210401}
}

@article{Yu2019,
  author  = {Yu, M. and Yang, P. and Gong, M. and Cao, Q. and Lu, Q. and Liu, H. and Zhang, S. and Plenio, M. B. and Jelezko, F. and Ozawa, T. and Goldman, N. and Cai, J.-M.},
  title   = {Experimental measurement of the quantum geometric tensor using coupled qubits in diamond},
  journal = {Natl. Sci. Rev.},
  volume  = {7},
  pages   = {254--260},
  year    = {2020},
  doi     = {10.1093/nsr/nwz193}
}

@article{Gianfrate2020,
  author  = {Gianfrate, A. and Bleu, O. and Dominici, L. and Ardizzone, M. and De Giorgi, M. and Ballarini, D. and Lerario, G. and West, K. W. and Pfeiffer, L. N. and Solnyshkov, D. D. and Sanvitto, D. and Malpuech, G.},
  title   = {Measurement of the quantum geometric tensor and of the anomalous Hall drift},
  journal = {Nature},
  volume  = {578},
  pages   = {381--385},
  year    = {2020},
  doi     = {10.1038/s41586-020-1989-2}
}
\bibliographystyle{unsrt}

\end{document}